\tikzstyle{vertex}=[circle, draw, inner sep=0pt, minimum size=3pt]
\newcommand{\vertex}{\node[vertex]}
\newcommand{\srec}{\!\;_{m}{\ostar}_{m'}\!\;}
\newcommand{\psrec}[2]{\;_{#1}{\ostar}_{#2}\;}
\begin{document}
\frontmatter          % for the preliminaries
\pagestyle{headings}  % switches on printing of running heads

\mainmatter              % start of the contributions

\title{Word-Representability of Graphs\\ with respect to Split Recomposition}

\titlerunning{Title}  % abbreviated title (for running head)
%                                     also used for the TOC unless
%                                     \toctitle is used
%

\author{Tithi Dwary \and %\inst{1} 
K. V. Krishna} %\inst{2}

\authorrunning{Tithi Dwary \and K. V. Krishna} % abbreviated author list (for running head)

\institute{Institute of Technology Guwahati, Guwahati, India,\\
	\email{tithi.dwary@iitg.ac.in};\;\;\; 
	\email{kvk@iitg.ac.in}}

\maketitle              % typeset the title of the contribution

\begin{abstract}
In this work, we show that the class of word-representable graphs is closed under split recomposition and determine the representation number of the graph obtained by recomposing two word-representable graphs. Accordingly, we show that the class of parity graphs is word-representable. Further, we obtain a characteristic property by which the recomposition of comparability graphs is a comparability graph. Consequently, we also establish the permutation-representation number (\textit{prn}) of the resulting comparability graph. We also introduce a subclass of comparability graphs, called \textit{prn}-irreducible graphs. We provide a criterion such that the split recomposition of two \textit{prn}-irreducible graphs is a comparability graph and determine the \textit{prn} of the resultant graph.
\end{abstract}

\keywords{Word-representable graphs, representation number, \textit{prn}, comparability graphs, split recomposition, \textit{prn}-irreducible graph.}

\section{Introduction}

A simple graph is called a word-representable graph if there is a word over its vertices such that any two vertices are adjacent in the graph if and only if they alternate in the word.  The class of word-representable graphs is an interesting class not only from graph theoretical point of view but also from algebraic and computational view points. Sergey Kitaev and Steven Seif first studied this class of graphs in the context of  Perkin semigroups \cite{perkinsemigroup}. The monograph \cite{words&graphs} introduces word-representable graphs, surveys seminal contributions, and provides their connections to other contexts. The class of graphs which admit transitive orientations, known as comparability graphs, is an important subclass of word-representable graphs. This class has a special feature that they can be represented by a concatenation of permutations on their vertices, called permutationally representable graphs. In fact, the class of comparability graphs is precisely the class of permutationally representable graphs \cite{perkinsemigroup}. Every comparability graph induces a poset based on one of its transitive orientations.

A word-representable graph is said to be $k$-word-representable if it is represented by a word in which every letter appears exactly $k$ times. The smallest $k$ such that a graph is $k$-word-representable is said to be the representation number of the graph. Similarly, the smallest number $k$ such that a comparability graph is represented by a concatenation of $k$ permutations on its vertices is called the permutation-representation number (\textit{prn}) of the comparability graph. The \textit{prn} of a comparability graph is precisely the dimension of the induced poset \cite{khyodeno2}. The class of graphs with representation number at most two is characterized as the class of circle graphs (\cite{MR2914710}) and that with \textit{prn} at most two is the class of permutation graphs (cf. \cite{khyodeno1}). While there are no characterizations for the class of graphs with any of these numbers more than two, it is known that the general problems of determining the representation number of a word-representable graph, and the \textit{prn} of a comparability graph are computationally hard \cite{MR2914710,yannakakis1982complexity}.

The concepts of split recomposition,  and its inverse operation, viz., split decomposition,  were introduced by Cunningham in \cite{Cunningham_2,Cunningham_1}. The split decomposition can be seen as a generalization of modular decomposition introduced by Gallai in \cite{Gallaipaper}. The literature has several variants of split decomposition and recomposition, which include $1$-join composition \cite{1joindecomp}. In this paper, we use the version defined by Bouchet \cite{Bouchet_1}. These concepts have a large range of applications including NP-hard optimization \cite{graphcoloring,NPcompleteproblems} and the recognition of certain classes of graphs such as distance-hereditary graphs \cite{DHgraph1,DHgraph2}, circle graphs \cite{circlegraph3,circlegraph2,circlegraph1}, and parity graphs \cite{cicerone1999extension,paritygraph2}. The class of distance-hereditary graphs is characterized in terms of split components \cite{Bouchet_1,Distancehereditary2,Distancehereditary1}. Further, it was shown in \cite{Bouchet_1} that the distance-hereditary graphs are $2$-word-representable graphs.  In \cite{circlegraph3}, through split decomposition and recomposition, 2-uniform words were constructed for recognizing circle graphs. In \cite{enumeration}, split decomposition is used to give a full enumeration of some graph classes viz., plotemic, block, and various cactus graphs, and also to provide forbidden subgraph characterizations of certain graph classes.  The class of perfect graphs\footnote{A graph $G$ is said to be a perfect graph if for every induced subgraph $H$ of $G$, the chromatic number of $H$ is equal to the size of a largest clique in $H$.} has both word-representable graphs as well as non-word-representable graphs \cite{words&graphs}. However, there is no characterization available for word-representable perfect graphs. It was shown in \cite{comp_perfectgraphs} that the class of perfect graphs is closed under split recomposition.

In this paper, our focus is on exploring the word-representability of graphs with respect to split recomposition. In Section 3, we show that the class of word-representable graphs is closed under split recomposition and also determine the representation number of the resulting graph obtained by recomposing two word-representable graphs. Accordingly, we establish the word-representability of a subclass of perfect graphs, known as parity graphs. The class of parity graphs includes both bipartite graphs and distance-hereditary graphs. While the class of comparability graphs is not closed under split recomposition, in Section 4, we obtain a characteristic property by which the recomposition of comparability graphs is a comparability graph. Consequently, we establish the \textit{prn} of the resulting comparability graph. In this connection, we also investigate the \textit{prn} of a graph obtained by connecting two graphs by an edge between specified vertices. In Section 5, we introduce a subclass of comparability graphs, which we call \textit{prn}-irreducible graphs, and study their split recomposition. We provide a criteria such that the split recomposition of two \textit{prn}-irreducible graphs is a comparability graph and accordingly determine the \textit{prn} of the resultant graph.  We also observe that the split recomposition of \textit{prn}-irreducible graphs is not \textit{prn}-irreducible.

\section{Preliminaries}

This section provides the necessary background material on word-representable graphs, posets and split decomposition. For more details, one may refer to \cite{Cunningham_2,words&graphs,Trotterbook}.

A word over a finite set of letters is a finite sequence written by juxtaposing the letters of the sequence. A subword $u$ of a word $w$ is a subsequence of the sequence $w$, denoted by $u \preceq w$. For instance, $baabb \preceq aabbaaabb$. Let $w$ be a word over $A$ and $B$ be a subset of $A$. We write $w_B$ to denote the subword of $w$ that precisely consists of all occurrences of the letters of $B$. For example, if $w=baabdca$, then $w_{\{a, b\}} = baaba$. We say that the letters $a$ and $b$ alternate in $w$ if $w_{\{a, b\}}$ is either of the form $abababa\cdots$ or $bababab\cdots$ which can be of even or odd length. A word $w$ is called $k$-uniform if every letter occurs exactly $k$ times in $w$. A $1$-uniform word $w$ is a permutation on the set of letters of $w$. We say a word $u$ is a factor of a word $w$ if $w = xuy$ for some words $x$ and $y$, possibly empty words.

In this paper, we consider only simple (i.e., without loops or parallel edges) and connected graphs. A graph $G = (V, E)$, where $V$ is a set of vertices and $E$ is the set of edges, is called a word-representable graph if there is a word $w$ with the symbols of $V$ such that, for all $a, b \in V$, $a$ and $b$ are adjacent in $G$ if and only if $a$ and $b$ alternate in $w$. A word-representable graph $G$ is said to be $k$-word-representable if a $k$-uniform word represents it. It is known that every word-representable graph is $k$-word-representable, for some $k$ \cite{MR2467435}. The smallest number $k$ such that a graph $G$ is $k$-word-representable is called the representation number of $G$, denoted by $\mathcal{R}(G)$. 

A word-representable graph $G$ is said to be permutationally representable if it can be represented by a concatenation of some number of permutations on the vertices of $G$. If $G$ is represented by $p_1p_2 \cdots p_k$, where each $p_i$ is a permutation on the vertices of $G$, then we say $G$ is a permutationally $k$-representable graph. The permutation-representation number (in short \textit{prn}) of $G$, denoted by $\mathcal{R}^p(G)$, is the smallest number $k$ such that $G$ is permutationally $k$-representable. It is clear that $\mathcal{R}(G) \le \mathcal{R}^p(G)$. Note that for the complete graph $K_n$ on $n$ vertices, $\mathcal{R}^p(K_n) = 1 = \mathcal{R}(K_n)$. The class of word-representable graphs as well as the class of comparability graphs are hereditary with respect to the induced subgraphs (cf. \cite{words&graphs}). Moreover, if $H$ is an induced subgraph of a word-representable graph $G$, then $\mathcal{R}(H) \le \mathcal{R}(G)$; further, if $G$ is a comparability graph, then $\mathcal{R}^p(H) \le \mathcal{R}^p(G)$.

To distinguish between two-letter words and edges of graphs, we write $\overline{ab}$ to denote an undirected edge between vertices $a$ and $b$. A directed edge from $a$ to $b$ is denoted by $\overrightarrow{ab}$. An orientation of a graph is an assignment of direction to each edge so that the resulting graph is a directed graph. A vertex $a$ of a graph is said to be a source with respect to an orientation, if the in-degree of $a$ is zero. Similarly, if the out-degree of $a$ is zero, then $a$ is called a sink with respect to the orientation. Note that a source with respect to an orientation of a graph can be seen as a sink with respect to the orientation obtained by reversing the direction of each edge, and vice versa. An orientation of a graph $G = (V, E)$ is said to be a transitive orientation if $\overrightarrow{ab}$ and $\overrightarrow{bc}$, then $\overrightarrow{ac}$ with respect to the orientation, for all $a, b, c \in V$. If a graph admits a transitive orientation, then it is called a comparability graph. It is known that a graph is permutationally representable if and only if it is a comparability graph \cite{perkinsemigroup}. 

A set $P$ with a partial order, say $<$ on $P$, is called a partially ordered set or simply a poset. Two elements $a, b \in P$ are said to be comparable if $a < b$ or $b < a$; otherwise, they are called incomparable. An element $a \in P$ is called a maximal element if there is no element $b \in P$ such that $a < b$ in $P$. Similarly, an element $a$ is called minimal in $P$, if there is no element $b$ such that $b < a$ in $P$. For every poset $P$ there is a corresponding comparability graph whose vertices are the elements of $P$ and two vertices are adjacent if and only if they are comparable in $P$. Conversely, every comparability graph $G$ induces a poset, denoted by $P_G$, based on a given transitive orientation on $G$.   

We now recall the notion of the dimension of a poset from \cite{PosetDimension} and relate it with the \textit{prn} of the corresponding comparability graph. A partial order on a set $P$ is said to be a linear order if any two elements are comparable. A realizer of a poset $P$ is a collection of linear orders $\{L_1, L_2, \ldots, L_k\}$ on $P$ such that for every $a, b \in P$, $a < b$ in $P$ if and only if $a < b$ in $L_i$, for all $1 \le i \le k$. The  dimension of a poset $P$, denoted by $\dim(P)$, is the smallest positive integer $k$ such that $P$ has a realizer of size $k$. It was shown in \cite{MR2914710} that a comparability graph $G$ is permutationally $k$-representable if and only if the poset $P_G$ has dimension at most $k$. Consequently,  
if $G$ is a comparability graph, then $\mathcal{R}^p(G) = k$ if and only if $\dim(P_G) = k$ (cf. \cite{khyodeno2}).

We use the following notations in this paper. Let $G = (V, E)$ be a graph. The neighborhood of a vertex $a \in V$ is denoted by $N_G(a)$, and is defined by $N_G(a) = \{b \in V \mid \overline{ab}\in E\}$. For $A \subseteq V$, the neighborhood of $A$, $N_G(A) = \bigcup_{a \in A} N_G(a) \setminus A$. Further, the subgraph of $G$ induced by $A$ is denoted by $G[A]$.

We recall the concepts of split decomposition of a connected graph and its inverse operation, viz., split recomposition. A split of a connected graph $G = (V, E)$ is a bipartition $\{V_1, V_2\}$ of $V$ (i.e., $V = V_1 \cup V_2$ and $V_1 \cap V_2 = \emptyset$) satisfying the following: (i) $|V_1| \geq 2$ and $|V_2| \geq 2$, (ii) Each $x \in N_G(V_1)$ is adjacent to every $y \in N_G(V_2)$. If a graph has no split, then it is said to be a prime graph. 

A split decomposition of a graph $G = (V, E)$ with split $\{V_1, V_2\}$  is represented as a disjoint union of the induced subgraphs $G[V_1]$ and $G[V_2]$ along with an edge $e = \overline{v_1v_2}$, where $v_1$ and $v_2$ are two new vertices such that $v_1$ and $v_2$ are adjacent to each vertices of $N_G(V_2)$  and $N_G(V_1)$, respectively. By deleting the edge $e$, we obtain two components with vertex sets $V_1 \cup \{v_1\}$ and $V_2 \cup \{v_2\}$ called the split components. The two components are then decomposed recursively to obtain a split decomposition of $G$. 

Note that each split component of a graph $G$ is isomorphic to an induced subgraph of $G$ \cite{cicerone1999extension}. A minimal split decomposition of a graph is a split decomposition whose split components can be cliques, stars and prime graphs such that the number of split components is minimized. While there can be multiple split decompositions of a graph, a minimal split decomposition of a graph is unique \cite{Cunningham_2,Cunningham_1}. 

Let $G = (V \cup \{m\}, E)$ and $G' = (V' \cup \{m'\}, E')$ be two graphs with disjoint vertex sets. Then the split recomposition of $G$ and $G'$ with respect to the vertices $m$ and $m'$, called the marked vertices, is denoted by $G \srec G'$, and defined by $G \srec G' = (V'', E'')$, where the vertex set $V'' = V \cup V'$ and edge set $E''$ contains edges of $G[V]$, edges of $G'[V']$ and $\{\overline{ab} \mid a \in N_{G}(m), b \in N_{G'}(m')\}$.

\begin{remark}\label{com_reco}
	Let $G = (V \cup \{m\}, E)$ and $G' = (V' \cup \{m'\}, E')$ be two graphs. Then $G$ and $G'$ are induced subgraphs of $G \srec G'$. For instance, if $b \in N_{G}(m)$ and $b' \in N_{G'}(m')$, then $(G \srec G')[V \cup \{b'\}]$ is isomorphic to $G$, and $G'$ is isomorphic to $(G \srec G')[V' \cup \{b\}]$.
\end{remark}

\section{Word-Representable Graphs}

As stated in the following remark, while word-representability is preserved by split decomposition, in this section, we show that split recomposition also preserves the word-representability of graphs. Consequently, we observe that the class of parity graphs is word-representable. We also relate the representation number of the graph obtained by the recomposition with its components. 

\begin{remark}\label{Remark_i}
	If a graph $G$ is word-representable, then its split components are word-representable, as they are induced subgraphs of $G$. 
\end{remark}

\begin{theorem}\label{Theorem_1}
	If $G = (V \cup \{m\}, E)$ and $G' = (V' \cup \{m'\}, E')$ are two $k$-word-representable graphs, then so is the split recomposition $G \srec G'$.
\end{theorem}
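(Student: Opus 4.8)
The plan is to construct an explicit $k$-uniform word for $G \srec G'$ from given $k$-uniform words for $G$ and $G'$. Let $u$ be a $k$-uniform word representing $G$ over the alphabet $V \cup \{m\}$, and let $u'$ be a $k$-uniform word representing $G'$ over $V' \cup \{m'\}$. The key structural fact to exploit is that in the recomposition, a vertex $b' \in V'$ must alternate with exactly those $a \in V$ lying in $N_G(m)$ (and be non-adjacent to all of $V \setminus N_G(m)$), and symmetrically. So the occurrences of $m$ inside $u$ encode precisely the ``interface behaviour'' that every vertex of $N_{G'}(m')$ should inherit, and vice versa.

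First I would write $u$ as a concatenation of blocks determined by the $k$ occurrences of $m$: say $u = u_0\, m\, u_1\, m\, u_2 \cdots m\, u_k$, where each $u_i$ is a (possibly empty) word over $V$. Similarly write $u' = u'_0\, m'\, u'_1 \cdots m'\, u'_k$ with each $u'_i$ over $V'$. The candidate word for $G \srec G'$ is obtained by replacing each occurrence of $m$ in $u$ by the full block structure of $u'$ appropriately interleaved — concretely, I expect the right construction to be
\[
w \;=\; u_0\, \big(u'_0 u'_1 \cdots u'_k\big)\, u_1\, \big(u'_0 u'_1 \cdots u'_k\big)\, u_2 \cdots \big(u'_0 u'_1 \cdots u'_k\big)\, u_k,
\]
i.e. insert a full copy of $u'$ with its marked letters deleted, in the slot of each $m$. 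Wait — this uses each letter of $V'$ exactly $k$ times (once per copy) only if each $u'$-copy contributes each $V'$-letter once, which holds since $u'$ minus the $m'$'s is $k$-uniform; but then each $V'$-letter appears $k$ times total, good, and each $V$-letter still appears $k$ times. Actually a cleaner and more symmetric description: take $u$, and in place of the $i$-th occurrence of $m$ insert the block $u'_{i-1}$ reading the decomposition of $u'$ — but the uniformity bookkeeping must be checked carefully, and I would settle on whichever of these variants makes the alternation proof cleanest.

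Then I would verify three cases of adjacency/non-adjacency in $w$: (a) for $a, b \in V$, they alternate in $w$ iff they alternate in $u$, because the inserted $V'$-blocks contribute no $V$-letters and the relative pattern of $a$'s and $b$'s is that of $u$ restricted to $\{a,b\}$ — this reduces to the fact that $u$ represents $G$; (b) symmetrically for $a, b \in V'$, using that the copies of the $V'$-part are placed into the same ``slots'' that $m$ occupied, so $w_{\{a,b\}}$ matches $u'_{\{a,b\}}$ (here one needs that each inserted copy of $u'\setminus\{m'\}$ is internally in the order of $u'$, and consecutive copies concatenate without breaking alternation — this is the point requiring the uniformity, since a non-uniform $u'$ could create a $\ldots a a \ldots$ across a block boundary); (c) for $a \in V$ and $b' \in V'$: $w_{\{a, b'\}}$ should be computed and shown to be alternating iff $a \in N_G(m)$ and $b' \in N_{G'}(m')$. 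For this I would use that $a$ alternates with $b'$ in $w$ exactly when $a$ ``sees'' each inserted block of $u'$ the way $m$ saw $a$ in $u$, i.e. $a$ alternates with $m$ in $u$, combined with $b'$ alternating with $m'$ in $u'$; the definition of split recomposition makes $\overline{ab'} \in E''$ precisely under that conjunction.

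The main obstacle I anticipate is case (c), and in particular getting the interleaving exactly right so that an $a \in N_G(m)$ genuinely alternates with a $b' \in N_{G'}(m')$ across all $k$ inserted copies without an accidental double-letter at a block seam, while simultaneously an $a \notin N_G(m)$ fails to alternate with such a $b'$. This is a delicate counting argument about where the single occurrence of $b'$ in each block sits relative to the occurrences of $a$ in the surrounding $u_i$'s, and it is exactly here that $k$-uniformity of both words (rather than mere word-representability) is used; the payoff is that the resulting $w$ is again $k$-uniform, so the theorem yields the clean statement that $k$-word-representability is preserved, which sets up the representation-number computation in the next result.
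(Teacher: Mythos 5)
Your overall strategy---decompose the two uniform words into blocks at the occurrences of the marked vertices and interleave them---is the same as the paper's, but as written the proposal has a genuine gap in the construction itself, before one even reaches the alternation analysis. Your first candidate word, which inserts a full copy of $u'$ with the $m'$'s deleted into each of the $k$ slots left by $m$, is not $k$-uniform: since $u'$ minus the $m'$'s is itself $k$-uniform, each inserted copy contributes every letter of $V'$ exactly $k$ times, so each such letter occurs $k^{2}$ times in $w$ (your parenthetical ``once per copy'' is exactly the miscount). Worse, for $a\in N_G(m)$ and $b'\in N_{G'}(m')$ this places $k$ occurrences of $b'$ between consecutive occurrences of $a$, so for $k\ge 2$ the word fails to realize precisely the cross edges it must. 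Your second variant (insert the $i$-th block of $u'$ into the $i$-th slot of $u$) is the right idea, but with the generic decompositions $u=u_0\,m\,u_1\cdots m\,u_k$ and $u'=u'_0\,m'\,u'_1\cdots m'\,u'_k$ you have $k+1$ blocks on each side and only $k$ slots, so the bookkeeping you defer does not close.

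The missing idea is a normalization by cyclic shift (a cyclic shift of a $k$-uniform representing word still represents the graph, \cite[Proposition 5]{MR2467435}): shift $u$ so that it ends in $m$, i.e.\ $u=w_1mw_2m\cdots w_km$, and shift $u'$ so that it begins with $m'$, i.e.\ $u'=m'w'_1m'w'_2\cdots m'w'_k$. Then both words split into exactly $k$ blocks, the interleaving $w_1w'_1w_2w'_2\cdots w_kw'_k$ is $k$-uniform, and your case (c) becomes immediate rather than delicate: $a\in N_G(m)$ forces $u_{\{a,m\}}=(am)^k$, so each $w_i$ contains exactly one $a$, positioned before its slot; dually each $w'_i$ contains exactly one $b'$, positioned after its slot; hence the restriction to $\{a,b'\}$ is exactly $(ab')^k$. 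If $a\notin N_G(m)$, two consecutive occurrences of $a$ lie in a single block $w_t$, which contains no letter of $V'$, so $a$ alternates with no vertex of $V'$. Since you explicitly leave case (c) as an anticipated obstacle and do not commit to a construction under which it can be verified, the proof is incomplete as it stands, although the repair is exactly the normalization above.
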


\begin{proof}
Let $w$ and $w'$ be two $k$-uniform words representing the graphs $G$ and $G'$, respectively. In view of \cite[Proposition 5]{MR2467435}, note that a cyclic shift of a $k$-uniform word representing a graph also represents the graph. Accordingly, assume $w = w_1mw_2m \cdots w_km$ and $w' = m'w'_1m'w'_2 \cdots m'w'_k$.  Note that $w_V = w_1w_2 \cdots w_k$ and $w'_{V'} = w'_1w'_2 \cdots w'_k$. Moreover, $w_V$ and $w'_{V'}$ are $k$-uniform words. We will show that the $k$-uniform word $u = w_1w'_1w_2w'_2 \cdots w_kw'_k$ represents the graph $G \srec G'$.
	
Note that the induced subgraphs $G[V]$ and $G'[V']$ are induced subgraphs of $G \srec G'$. Further, the $k$-uniform words $w_V$ and $w'_{V'}$ represent $G[V]$ and $G'[V']$, respectively. Accordingly, since $u_V = w_V$ and $u_{V'} = w'_{V'}$, any two vertices of $G[V]$ or of $G'[V']$ are adjacent if and only if they alternate in the word $u$. 

For $a \in V$ and $b \in V'$, note that $a$ and $b$ are adjacent in $G \srec G'$ if and only if $a \in N_G(m)$ and $b \in N_{G'}(m')$. Accordingly, we consider the following cases to complete the proof.  

Let $a \in N_G(m)$ and $b \in N_{G'}(m')$. Note that $w_{\{a, m\}} =\; \stackrel{k}{am\; \cdots\; am}$ (i.e., the word obtained by concatenating $am$ for $k$ times) and $w'_{\{b, m'\}} =\; \stackrel{k}{m'b\; \cdots\; m'b}$. Accordingly, for all $1 \le i \le k$, $w_i$ and $w'_i$ contain exactly one copy of $a$ and $b$, respectively. Hence, $u_{\{a, b\}} =\; \stackrel{k}{ab\; \cdots\; ab}$ so that $a$ and $b$ alternate in $u$. 

Let $a \notin N_{G}(m)$. Since the $k$-uniform word $w$ represents $G$, the vertices $a$ and $m$ do not alternate in $w$. Hence, $aa$ is a substring of $w_t$, for some $t$. For any $b \in V'$, note that $b$ does not appear in $w_t$. According to the construction of $u$, the subword $u_{\{a, b\}}$ has $aa$ as a factor. Hence, $a$ and $b$ do not alternate in $u$. Similarly, when  $b \notin N_{G'}(m')$, we can observe that $b$ does not alternate with any $a \in V$. 

Hence, the $k$-uniform word $u$ represents $G \srec G'$. \qed	
\end{proof}

\begin{corollary}
	If $G = (V \cup \{m\}, E)$ and $G' = (V' \cup \{m'\}, E')$ are two word-representable graphs such that $\mathcal{R}(G) = k$ and $\mathcal{R}(G') = k'$, then $\mathcal{R}(G \srec G') = \max\{k, k'\}$.
\end{corollary}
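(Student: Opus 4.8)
The plan is to prove the two inequalities $\mathcal{R}(G \srec G') \le \max\{k, k'\}$ and $\max\{k, k'\} \le \mathcal{R}(G \srec G')$ separately, and then combine them. Assume without loss of generality that $k \le k'$, so that $\max\{k, k'\} = k'$.

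For the upper bound, I would first invoke the standard fact (see \cite{MR2467435}) that a $k$-word-representable graph is also $\ell$-word-representable for every $\ell \ge k$. Applying this to $G$, we may regard $G$ as $k'$-word-representable, while $G'$ is $k'$-word-representable by hypothesis. Now Theorem~\ref{Theorem_1}, applied with the common uniformity $k'$, shows that $G \srec G'$ is $k'$-word-representable, hence $\mathcal{R}(G \srec G') \le k'$.

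For the lower bound, recall from Remark~\ref{com_reco} that both $G$ and $G'$ are isomorphic to induced subgraphs of $G \srec G'$, which is itself word-representable by Theorem~\ref{Theorem_1}. Since the representation number is monotone under taking induced subgraphs (as recorded in Section~2), we obtain $k = \mathcal{R}(G) \le \mathcal{R}(G \srec G')$ and $k' = \mathcal{R}(G') \le \mathcal{R}(G \srec G')$, and therefore $\max\{k, k'\} \le \mathcal{R}(G \srec G')$. Combining the two inequalities gives $\mathcal{R}(G \srec G') = \max\{k, k'\}$.

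The proof is little more than bookkeeping on top of Theorem~\ref{Theorem_1} and the hereditary property of $\mathcal{R}$; the only step that genuinely needs care is the passage to a common uniformity $k'$ in the upper bound, which relies on the cited fact that $k$-word-representability is preserved under increasing $k$. Without that fact, one would instead have to adapt the interleaving construction of Theorem~\ref{Theorem_1} directly to uniform words of unequal length, which is a slightly more delicate (though still routine) argument.
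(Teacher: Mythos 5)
Your argument is correct and coincides with the paper's own proof: both use the fact from \cite{MR2467435} that $k$-word-representability passes to any larger uniformity, apply Theorem~\ref{Theorem_1} at the common level $\max\{k,k'\}$ for the upper bound, and use Remark~\ref{com_reco} together with the hereditary property of $\mathcal{R}$ for the lower bound. Nothing is missing.
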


\begin{proof}
	Let $t = \max\{k, k'\}$. Using \cite[Observation 4]{MR2467435}, both $G$ and $G'$ are $t$-word-representable. Consequently, by Theorem \ref{Theorem_1},  $G \srec G'$ is $t$-word-representable so that $\mathcal{R}(G \srec G') \leq t$. But, by Remark \ref{com_reco}, since $G$ and $G'$ are isomorphic to certain induced subgraphs of $G \srec G'$, we have $\mathcal{R}(G \srec G') \geq t$. Hence, $\mathcal{R}(G \srec G') = \max\{k, k'\}$. \qed	
\end{proof}

We now summarize the characterization of word-representability of a graph with respect to its split decomposition in the following theorem. 

\begin{theorem}\label{Theorem_2}
	Let $H_i$ $(1 \le i \le s)$ be the components of a split decomposition of a graph $G$. Then $G$ is word-representable if and only if $H_i$ is word-representable for all $1 \le i \le s$. Further,  $\mathcal{R}(G) = \displaystyle\max_{1 \le i \le s} \mathcal{R}(H_i)$.
\end{theorem}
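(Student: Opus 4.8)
The plan is to argue by induction on the number $s$ of components in the split decomposition, exploiting the fact that a split decomposition is produced by iterating the elementary split operation, so that $G$ is recovered from $H_1, \dots, H_s$ by iterating the split recomposition $\srec$ along the edges of the associated decomposition tree. The base case $s = 1$ is trivial, since then $G = H_1$.

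For the inductive step I would pick any edge of the decomposition tree; it joins two components, say $H_j$ and $H_k$, each carrying the marked vertex associated to that edge. Contracting this edge replaces $H_j$ and $H_k$ by the single graph $H_j \srec H_k$, whose marked vertices are precisely the remaining marked vertices of $H_j$ and $H_k$ (those corresponding to the other tree edges). This yields a split decomposition of $G$ with $s-1$ components, namely $H_j \srec H_k$ together with the $H_i$ for $i \ne j, k$, to which the induction hypothesis applies.

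Now I would combine three earlier facts. By Theorem~\ref{Theorem_1}, if $H_j$ and $H_k$ are both $t$-word-representable then so is $H_j \srec H_k$; by Remark~\ref{com_reco}, $H_j$ and $H_k$ are induced subgraphs of $H_j \srec H_k$, so if the latter is word-representable so are the former, with no larger representation number; and by the corollary to Theorem~\ref{Theorem_1}, when $H_j$ and $H_k$ are word-representable, $\mathcal{R}(H_j \srec H_k) = \max\{\mathcal{R}(H_j), \mathcal{R}(H_k)\}$. The first two facts give that $H_j \srec H_k$ is word-representable if and only if both $H_j$ and $H_k$ are. Hence, by the induction hypothesis applied to the $(s-1)$-component decomposition, $G$ is word-representable iff $H_j \srec H_k$ and all $H_i$ with $i \ne j, k$ are word-representable, which happens iff every $H_i$ is word-representable. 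When this holds, the induction hypothesis gives $\mathcal{R}(G) = \max\bigl\{\mathcal{R}(H_j \srec H_k),\ \max_{i \ne j, k}\mathcal{R}(H_i)\bigr\}$, and substituting the third fact yields $\mathcal{R}(G) = \max_{1 \le i \le s}\mathcal{R}(H_i)$.

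There is no real obstacle here; the only point requiring care is the bookkeeping in the inductive step, namely verifying that contracting a tree edge genuinely produces a legitimate split decomposition with one fewer component and the correct marked vertices, so that the induction hypothesis is applicable. For the forward implication one could alternatively bypass the induction altogether: each split component is isomorphic to an induced subgraph of $G$, and both word-representability and the inequality $\mathcal{R}(H) \le \mathcal{R}(G)$ for induced subgraphs $H$ are hereditary, giving $\max_{1 \le i \le s}\mathcal{R}(H_i) \le \mathcal{R}(G)$ at once.
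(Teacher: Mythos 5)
Your proof is correct and follows essentially the route the paper intends: the paper states Theorem~\ref{Theorem_2} as a summary of Remark~\ref{Remark_i} (split components are induced subgraphs, giving the forward direction and the lower bound on $\mathcal{R}(G)$) together with Theorem~\ref{Theorem_1} and its corollary (giving the converse and the matching upper bound by recomposing), which is exactly the combination you carry out, organized as an induction on the number of components. The only point you flag yourself --- that contracting a tree edge of the decomposition yields a valid split decomposition with one fewer component --- is the standard fact that recomposition inverts a single split, so there is no gap.
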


In view of the fact that the class of perfect graphs is closed under split recomposition \cite{comp_perfectgraphs}, we have the following corollary of Theorem \ref{Theorem_2}. Although a characterization for word-representable perfect graphs is an open problem, this corollary may be useful in this regard.

\begin{corollary} \label{perfectgraphs}
	 A perfect graph is not word-representable if and only if at least one of its prime components is a non-word-representable perfect graph. 
\end{corollary}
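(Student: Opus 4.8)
The plan is to combine Theorem \ref{Theorem_2} with the structure of a minimal split decomposition and the fact that the class of perfect graphs is closed under split recomposition \cite{comp_perfectgraphs}. Fix a minimal split decomposition of the perfect graph $G$, so its components $H_1, \ldots, H_s$ are cliques, stars, or prime graphs. First I would observe that each $H_i$ is an induced subgraph of $G$ (recalled in the Preliminaries), hence every $H_i$ is itself a perfect graph, since perfectness is hereditary. Thus ``prime component'' in the statement means precisely a prime $H_i$, and such an $H_i$ is automatically a perfect graph; so the phrase ``non-word-representable perfect graph'' is just ``non-word-representable component that happens to be prime.''

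For the forward direction, suppose $G$ is not word-representable. By Theorem \ref{Theorem_2}, at least one component $H_i$ is not word-representable. Cliques and stars are trivially word-representable (a clique $K_n$ is represented by a single permutation, and a star is a well-known word-representable graph — indeed both are comparability graphs), so the offending $H_i$ cannot be a clique or a star; hence it is a prime component, and as noted above it is a perfect graph. This gives a non-word-representable perfect prime component.

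For the converse, suppose some prime component $H_i$ is a non-word-representable perfect graph. Then by Theorem \ref{Theorem_2} (the ``only if'' half, i.e., word-representability of $G$ would force word-representability of every $H_i$), $G$ cannot be word-representable. Note this direction does not even need perfectness of $H_i$ — it is stated that way only so that the two sides of the equivalence are phrased consistently. Assembling these two implications yields the corollary.

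The only genuine content beyond invoking Theorem \ref{Theorem_2} is the reduction to prime components: one must justify that in a minimal split decomposition the non-prime components (cliques and stars) are always word-representable, so that any failure of word-representability is witnessed by a prime component. I expect this to be the main (and only mild) obstacle — it is really just the remark that cliques and stars are comparability graphs, hence word-representable — together with the bookkeeping observation that every component of a split decomposition of a perfect graph is again perfect.
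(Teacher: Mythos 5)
Your proof is correct and is essentially the argument the paper intends: the corollary is stated as an immediate consequence of Theorem~\ref{Theorem_2}, and your route (fix the minimal split decomposition, note cliques and stars are word-representable so any non-word-representable component must be prime, and use heredity of perfectness to see that each component is perfect) is exactly the standard filling-in of that one-line derivation. The only cosmetic difference is that the paper points to closure of perfect graphs under split recomposition, which is really needed for the remark following the corollary (that the non-word-representable prime perfect graphs \emph{generate} all non-word-representable perfect graphs), whereas for the stated equivalence your appeal to heredity is the right justification.
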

Thus, the class of prime perfect graphs that are not word-representable completely determines the class of  perfect graphs that are not word-representable. 

A graph is called a parity graph if the lengths of any two induced paths between a pair of vertices are of the same parity. The class of parity graphs is a subclass of perfect graphs \cite{sachs}.  Recall from \cite{cicerone1999extension} that a graph $G$ is a parity graph if and only if every component in its minimal split decomposition is a clique or a bipartite graph.  Accordingly, since cliques and bipartite graphs are word-representable, we have the following corollary of Theorem \ref{Theorem_2}. Thus, we establish a class of word-representable graphs within perfect graphs which includes bipartite graphs as well as distance-hereditary graphs. 

\begin{corollary}\label{parity_graphs}
The class of parity graphs is word-representable.
\end{corollary}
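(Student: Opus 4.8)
The plan is to deduce Corollary~\ref{parity_graphs} directly from Theorem~\ref{Theorem_2} together with the structural characterization of parity graphs recalled just above the statement. First I would invoke the characterization from \cite{cicerone1999extension}: a connected graph $G$ is a parity graph if and only if every component in its \emph{minimal} split decomposition is either a clique or a bipartite graph. Since a minimal split decomposition is in particular a split decomposition, Theorem~\ref{Theorem_2} applies with the $H_i$ being exactly these minimal components.

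The next step is to observe that each component type on the list is word-representable. Cliques $K_n$ are word-representable with $\mathcal{R}(K_n)=1$ (a single permutation), as noted in the preliminaries. Bipartite graphs are word-representable; indeed it is classical that bipartite graphs are comparability graphs (orient every edge from one side of the bipartition to the other, which is vacuously transitive) and hence permutationally representable, so in particular word-representable. Thus for every $i$, $H_i$ is word-representable. By the ``if'' direction of Theorem~\ref{Theorem_2}, $G$ is word-representable. Since $G$ was an arbitrary parity graph, the whole class is word-representable.

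I would also remark, to match the closing sentence of the paragraph, that this argument simultaneously covers bipartite graphs (trivially, as their own minimal split components) and distance-hereditary graphs, since distance-hereditary graphs are precisely those whose minimal split components are all cliques or stars, and stars are bipartite — so distance-hereditary graphs form a subclass of parity graphs and the corollary recovers their word-representability (in fact $2$-word-representability, consistent with \cite{Bouchet_1}).

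There is essentially no obstacle here: the corollary is a routine specialization of Theorem~\ref{Theorem_2}, and the only thing to be careful about is citing the correct structural characterization of parity graphs (the ``clique or bipartite component'' form from \cite{cicerone1999extension}) rather than some other equivalent definition, and noting that Theorem~\ref{Theorem_2} is stated for an arbitrary split decomposition and therefore certainly applies to the minimal one. If one wanted a quantitative refinement, the ``Further'' clause of Theorem~\ref{Theorem_2} would give $\mathcal{R}(G)=\max_i \mathcal{R}(H_i)\le 2$, since cliques have representation number $1$ and bipartite graphs have representation number at most $2$; but the stated corollary only claims word-representability, so this sharper bound is optional.
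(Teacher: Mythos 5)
Your proof is correct and follows essentially the same route as the paper: cite the characterization from \cite{cicerone1999extension} that the minimal split components of a parity graph are cliques or bipartite graphs, note both are word-representable, and apply Theorem~\ref{Theorem_2}. One caveat on your optional ``quantitative refinement'': the claim that bipartite graphs have representation number at most $2$ is false (crown graphs $H_{n,n}$ are bipartite with representation number growing with $n$, and graphs with $\mathcal{R}\le 2$ are exactly circle graphs), so the bound $\mathcal{R}(G)\le 2$ for parity graphs does not follow; since you explicitly flag that remark as inessential, the proof of the corollary itself is unaffected.
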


\section{Comparability Graphs}

In this section, we study the recomposition of comparability graphs. First we observe that the class of comparability graphs is not closed under split recomposition. In Subsection 4.1, we give a sufficient condition so that a recomposition is a comparability graph. Further, in Subsection 4.2, we investigate the \textit{prn} of a recomposition of comparability graphs. Finally, in Subsection 4.3, we provide a characteristic property by which a recomposition of comparability graphs is a comparability graph.

Let $G$ be a comparability graph. Note that every split component of $G$ is a comparability graph, as it is (isomorphic) an induced subgraph of $G$. However, as shown in the following example, split recomposition of two comparability graphs is not always a comparability graph.

\begin{example}\label{ex_1}
	The graphs $G$ and $G'$ shown in Fig. \ref{fig_1} are comparability graphs. However, the split recomposition $G \srec G'$ is not a comparability graph (cf. \cite[Chapter 5]{MR2063679}). 
	\begin{figure}[ht]
		\centering
		\begin{minipage}{.4\textwidth}
			\centering
			\[\begin{tikzpicture}[scale=0.6]
			\vertex (a_8) at (-3.4,0) [fill=black, label=below:$1$] {};
			\vertex (a_7) at (-2.4,0) [fill=black, label=below:$2$] {};
			\vertex (a_6) at (-1.4,0) [label=below:$m$] {};
			\vertex (a_1) at (1,0) [label=below:$m'$] {};
			\vertex (a_4) at (2,1.5) [fill=black, label=right:$3$] {};
			\vertex (a_2) at (2,0.5) [fill=black, label=right:$4$] {};
			\vertex (a_3) at (2,-0.5) [fill=black, label=right:$5$] {};
			\vertex (a_5) at (2,-1.5) [fill=black, label=right:$6$] {};
			\node (a_9) at (-2.4,-1.5) [label=below:$G$] {};
			\node (a_{10}) at (2,-1.5) [label=below:$G'$] {};
			
			\path
			(a_1) edge (a_2)
			(a_1) edge (a_3)
			(a_2) edge (a_3)
			
			(a_2) edge (a_4)
			(a_3) edge (a_5)
			(a_6) edge (a_7)
			(a_7) edge (a_8) ;
			\end{tikzpicture}\] 
		\end{minipage}%
		\begin{minipage}{.3\textwidth}
			\centering
			
			\[\begin{tikzpicture}[scale=0.6]
			\vertex (a_1) at (-0.4,0) [fill=black,label=below:$2$] {};
			\vertex (a_2) at (1,0.5) [fill=black,label=right:$4$] {};
			\vertex (a_3) at (1,-0.5) [fill=black,label=right:$5$] {};
			\vertex (a_4) at (1,1.5) [fill=black,label=right:$3$] {};
			\vertex (a_5) at (1,-1.5) [fill=black,label=right:$6$] {};
			\vertex (a_6) at (-1.4,0) [fill=black,label=below:$1$] {};
			\node (a_{10}) at (-0.4,-1.5) [label=below:$G \srec G'$] {};
			
			\path
			(a_1) edge (a_2)
			(a_1) edge (a_3)
			(a_2) edge (a_3)
			(a_1) edge (a_6)
			(a_2) edge (a_4)
			(a_3) edge (a_5);
			\end{tikzpicture}\]
		\end{minipage}%
		
		\caption{Recomposition of two comparability graphs} 
		\label{fig_1}
	\end{figure}
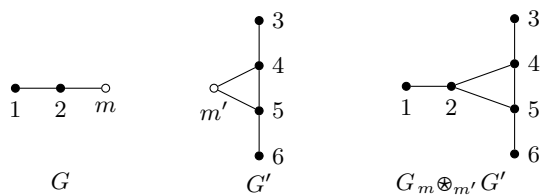
	
\end{example}

For the rest of the paper, let $G = (V \cup \{m\}, E)$ and $G' = (V' \cup \{m'\}, E')$ be two graphs with disjoint vertex sets. Consider the graph $$H = (V \cup V' \cup \{m, m'\}, E \cup E' \cup \{\overline{mm'}\}),$$ which is obtained by adding the edge $\overline{mm'}$ between $G$ and $G'$. This graph operation is useful in the sequel for investigating a sufficient condition and the \textit{prn} of $G \srec G'$.

\subsection{Sufficient Condition}

In this subsection, we give a necessary and sufficient condition for $H$ to be a comparability graph. Further, we also discuss the conditions under which the recomposition $G \srec G'$ is a comparability graph. As a consequence, we also establish that a recomposition of bipartite graphs is a bipartite graph.

\begin{theorem} \label{lemma_1}
	 The graph $H$ is a comparability graph if and only if there exist transitive orientations $T$ and $T'$ of $G$ and $G'$, respectively, such that $m$ is a source of $T$ and $m'$ is a sink of $T'$, or vice versa.
\end{theorem}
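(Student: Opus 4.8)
The plan is to prove both directions by working with transitive orientations directly and tracking how the single bridging edge $\overline{mm'}$ interacts with each side.

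For the ``if'' direction, suppose $T$ is a transitive orientation of $G$ with $m$ a source and $T'$ is a transitive orientation of $G'$ with $m'$ a sink. I would define an orientation $\widehat{T}$ of $H$ by taking all edges of $G$ oriented as in $T$, all edges of $G'$ oriented as in $T'$, and orienting the bridge as $\overrightarrow{mm'}$. I then need to verify transitivity: suppose $\overrightarrow{xy}$ and $\overrightarrow{yz}$ are edges of $\widehat{T}$. If all three vertices lie in $G$ or all three lie in $G'$, transitivity follows from $T$ or $T'$ respectively. The only mixed cases arise when the path $x \to y \to z$ crosses the bridge; since $m$ is the unique vertex of $G$ incident to the bridge and $m'$ is the unique vertex of $G'$ incident to it, the only way a directed $2$-path uses the bridge is $y = m$ or $y = m'$ or one of the endpoints equals $m$ or $m'$. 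Because $m$ is a source of $T$, there is no edge $\overrightarrow{xm}$ in $\widehat{T}$ with $x \in V$, so $m$ cannot be the head of the first edge coming from inside $G$; the only edge into $m$ would have to come from $m'$, but the bridge is oriented $\overrightarrow{mm'}$, so $m$ has in-degree $0$ in $\widehat{T}$ as well, and dually $m'$ has out-degree $0$. Hence $y = m$ is impossible (it would need an incoming edge) and $y = m'$ is impossible (it would need an outgoing edge), so no directed $2$-path actually crosses the bridge, and there is nothing to check for the mixed case — $\widehat T$ is transitive, so $H$ is a comparability graph. The ``vice versa'' case (swap source/sink) is symmetric, orienting the bridge the other way.

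For the ``only if'' direction, suppose $H$ has a transitive orientation $\widehat{T}$. Restricting $\widehat{T}$ to the edges of $G$ gives a transitive orientation $T$ of $G$ (transitivity is inherited on induced subgraphs, and $G = H[V \cup \{m\}]$), and similarly $\widehat{T}$ restricts to a transitive orientation $T'$ of $G'$. Without loss of generality the bridge is oriented $\overrightarrow{mm'}$ in $\widehat{T}$ (otherwise swap the roles of $G$ and $G'$, giving the ``vice versa'' alternative). I claim $m$ is a source of $T$. If not, there is some $x \in V$ with $\overrightarrow{xm}$ in $\widehat{T}$; combined with $\overrightarrow{mm'}$, transitivity of $\widehat{T}$ forces $\overrightarrow{xm'}$ to be an edge of $H$ — but $x \in V$ and the only vertex of $V \cup \{m\}$ adjacent to $m'$ in $H$ is $m$, so $\overline{xm'}$ is not an edge of $H$, a contradiction. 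Hence $m$ is a source of $T$. Dually, if $m'$ were not a sink of $T'$, there would be $y \in V'$ with $\overrightarrow{m'y}$, and $\overrightarrow{mm'}$ together with $\overrightarrow{m'y}$ forces $\overrightarrow{my}$, again impossible since $m$ is adjacent only to $m'$ on the $G'$ side. So $m'$ is a sink of $T'$, completing this direction.

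The main obstacle — really the only subtle point — is the bookkeeping in the ``if'' direction: making sure that gluing two transitive orientations along a bridge cannot create a new non-transitive configuration. The key structural observation that makes it work is that $m$ and $m'$ are cut vertices (the bridge is a cut edge), so any directed $2$-path in $H$ either stays entirely within $G$ or entirely within $G'$, or else has its middle vertex equal to $m$ or $m'$ with the path entering and leaving through distinct components; the source/sink hypotheses are exactly what rules out the latter. Once this is made precise, both directions are short. I would also remark afterwards that this theorem is the natural stepping stone toward analyzing $G \srec G'$, since $G \srec G'$ differs from $H$ only by the local surgery at the marked vertices, and Remark~\ref{com_reco} already records that $G$ and $G'$ embed as induced subgraphs of the recomposition.
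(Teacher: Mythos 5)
Your proof is correct and follows essentially the same route as the paper's: orient the bridge as $\overrightarrow{mm'}$ and glue the two transitive orientations for the forward direction, and for the converse use transitivity through the bridge to force $m$ to be a source and $m'$ a sink before restricting $T_H$ to $G$ and $G'$. The only difference is that you explicitly verify transitivity of the glued orientation (the paper dismisses this as ``clearly''), and your case analysis there — that every mixed directed $2$-path must use the bridge and is therefore ruled out by the source/sink hypotheses — is correct.
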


\begin{proof}
	Suppose $T$ and $T'$ are transitive orientations of $G$ and $G'$, respectively, such that  $m$ is a source of $T$ and $m'$ is a sink of $T'$. We extend $T$ and $T'$ to an orientation $T_H$ of $H$ by orienting the edge $\overline{mm'}$ as $\overrightarrow{mm'}$. In case, if $m$ is a sink of $T$ and $m'$ is a source of $T'$, then assign $\overrightarrow{m'm}$ as the orientation to the edge $\overline{mm'}$ in $T_H$. In both the cases, clearly, $T_H$ is a transitive orientation of $H$ so that $H$ is a comparability graph.  
	 
	Conversely, suppose $H$ is a comparability graph. Let $T_H$ be a transitive orientation of $H$ in which the edge $\overline{mm'}$ is oriented as $\overrightarrow{mm'}$. We claim that $m$ is a source and $m'$ is a sink with respect to $T_H$. On the contrary, we have the following cases:
	 \begin{itemize}
	 	\item $m$ is not a source: Then, for some $a \in V$,  we have $\overrightarrow{am}$ in $T_H$. By transitivity of $T_H$, $\overrightarrow{am'}$ exists in $T_H$; which is a contradiction as $\overline{am'}$ is not an edge in $H$. 
	 	\item $m'$ is not a sink: Then, for some $b \in V'$,  we have $\overrightarrow{m'b}$ in $T_H$. By transitivity of $T_H$, $\overrightarrow{mb}$ exists in $T_H$; which is a contradiction as $\overline{mb}$ is not an edge in $H$.  
	\end{itemize}
	This proves our claim. Note that $H[V \cup \{m\}] = G$ and $H[V' \cup \{m'\}] = G'$. Further, let $T$ and $T'$ be the restrictions of $T_H$ to $G$ and $G'$, respectively. Clearly, $T$ and $T'$ are transitive orientations of $G$ and $G'$, respectively. Moreover, $m$ is a source of $T$ and $m'$ is a sink of $T'$. 
 
 Similarly, if the edge $\overline{mm'}$ of $H$ is oriented as $\overrightarrow{m'm}$ in $T_H$, then we can show that there exist transitive orientations $T$ and $T'$ of $G$ and $G'$, respectively, such that $m$ is a sink of $T$ and $m'$ is a source of $T'$. 
\qed	
\end{proof}

We now present a condition under which a recomposition of comparability graphs is a comparability graph.

\begin{theorem}\label{rec_comparability}
	If $G$ and $G'$ are comparability graphs such that $m$ is a source of $G$ and $m'$ is a sink of $G'$ with respect to some transitive orientations of $G$ and $G'$,  then $G \srec G'$ is a comparability graph.
\end{theorem}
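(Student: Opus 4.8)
The plan is to build a transitive orientation of $G \srec G'$ directly from the given transitive orientations of $G$ and $G'$. By hypothesis there exist transitive orientations $T$ of $G$ with $m$ a source, and $T'$ of $G'$ with $m'$ a sink. First I would set up notation: write $A = N_G(m) \subseteq V$ and $B = N_{G'}(m') \subseteq V'$, so that the edges of $G \srec G'$ are precisely the edges of $G[V]$ (oriented by $T$ restricted to $V$), the edges of $G'[V']$ (oriented by $T'$ restricted to $V'$), plus all edges $\overline{ab}$ with $a \in A$, $b \in B$. For these cross-edges, orient every one of them as $\overrightarrow{ab}$ from the $A$-side to the $B$-side; call the resulting orientation $\widehat{T}$ of $G \srec G'$. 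The bulk of the argument is then to verify that $\widehat{T}$ is transitive by checking all the ways a directed $2$-path $\overrightarrow{xy}, \overrightarrow{yz}$ can straddle the two sides.

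The case analysis splits by where $x, y, z$ lie. If all three are in $V$, transitivity follows from $T$ being transitive on $G$ (restricted to $V$); symmetrically for all three in $V'$. The remaining mixed cases are where the real content is. Suppose $x, y \in V$ and $z \in V'$: then $\overrightarrow{yz}$ is a cross-edge, so $y \in A = N_G(m)$ and $z \in B$. Since $m$ is a source of $T$, the edge between $m$ and $y$ is oriented $\overrightarrow{my}$, and together with $\overrightarrow{xy}$ — wait, that goes the wrong way; instead observe that we need $\overrightarrow{xz}$, i.e. $x \in A$ and $z \in B$. Since $z \in B$ already, it suffices to show $x \in N_G(m)$: from $\overrightarrow{xy}$ in $T$ and the fact that $m$ is a source (so $\overrightarrow{my}$ is in $T$), transitivity of $T$ applied to... hmm, $\overrightarrow{my}$ and we'd need an edge out of $y$; that is not available. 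The correct reasoning: since $\overrightarrow{my}$ in $T$ and $\overrightarrow{xy}$ in $T$, both point into $y$, which gives no conclusion directly, so instead I argue that $x$ and $m$ must be adjacent — if they were not, then $\{m, x\}$ would be a non-edge with $\overrightarrow{my}$, $\overrightarrow{xy}$, which is fine in a comparability graph, so this needs care. Let me restate: the clean way is to use that $G$ is an induced subgraph of $G \srec G'$ (Remark~\ref{com_reco}) together with the source/sink structure, and to check the mixed $2$-paths using that every vertex of $A$ is, in $\widehat{T}$, "above" nothing from $V'$ and "below" everything in $B$, while every vertex of $B$ is "above" everything in $A$.

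The genuinely delicate mixed cases are: (a) $x \in V$, $y \in V'$, $z \in V'$, where $\overrightarrow{xy}$ forces $x \in A$, $y \in B$, and we need $\overrightarrow{xz}$ hence $z \in B$ — this follows because $\overrightarrow{yz}$ in $T'$ with $y \in B = N_{G'}(m')$ and $m'$ a sink (so $\overrightarrow{ym'}$ in $T'$) gives, by transitivity of $T'$, $\overrightarrow{zm'}$, so $z \in N_{G'}(m') = B$, as needed; and the symmetric case (b) $x \in V$, $y \in V$, $z \in V'$, where $\overrightarrow{yz}$ forces $y \in A$, $z \in B$, and we need $x \in A$: since $m$ is a source, $\overrightarrow{my}$ in $T$, and with $\overrightarrow{xy}$ in $T$ I instead reverse-engineer via the source property — actually the right statement is that $x \in N_G(m)$ because otherwise the induced subgraph on $\{x, y, m\}$ would have $\overrightarrow{xy}$, $\overrightarrow{my}$ and $x \not\sim m$, which is not forbidden, so the correct deduction must use that $A$ is exactly the neighborhood and that in a transitive orientation with $m$ a source, the set $V \setminus (A \cup \{m\})$-to-$A$ edges are all oriented into $A$; I would prove the small lemma that if $\overrightarrow{xy}$ in $T$ and $y \in N_G(m)$ then $x \in N_G(m)$, which holds because $\overrightarrow{my}$ and transitivity applied to... no — it holds because if $x \notin N_G(m)$ then consider: there is no edge $\overline{xm}$, yet $\overrightarrow{xy}$ and $\overrightarrow{my}$ coexist, which is consistent, so this sub-lemma is FALSE in general and the whole straddling-orientation must instead be chosen more cleverly. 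I expect this to be the main obstacle: the naive "orient all cross edges $A \to B$" may fail transitivity, and the actual proof likely orients cross edges using the $H$-construction from Theorem~\ref{lemma_1} — take the transitive orientation $T_H$ of $H$ (which exists by that theorem, since $m$ is a source of $T$ and $m'$ a sink of $T'$), and then observe that contracting/deleting the marked vertices $m, m'$ while keeping the cross edges $\overrightarrow{ab}$ inherited from the $2$-paths $\overrightarrow{am} \to \cdots$ — more precisely, for $a \in A$ and $b \in B$, in $T_H$ we have $\overrightarrow{am}$ (since... wait $m$ is a source, so $\overrightarrow{ma}$), $\overrightarrow{mm'}$, $\overrightarrow{m'b}$ is wrong since $m'$ is a sink; so $\overrightarrow{bm'}$ — then $\overrightarrow{ma} \to$ nothing useful. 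I would therefore orient cross edges as follows and verify directly: $\overrightarrow{ab}$ for $a \in A, b \in B$, and prove transitivity by the case analysis above, handling case (b) by noting that when $y \in A$ and $\overrightarrow{xy} \in T$, we do NOT need $x \in A$ in general — rather, we need to recheck whether $\overrightarrow{xz}$ is even an edge, and if $x \notin A$ then $\overrightarrow{xz}$ is not an edge of $G \srec G'$, so $\overrightarrow{xy}, \overrightarrow{yz}$ with $y \in A$, $x \notin A$ would violate transitivity — meaning the naive orientation genuinely fails, and the resolution is that such a configuration cannot occur because... I suspect the theorem's proof adds a hypothesis or uses a cleverer orientation, and pinning down exactly which cross-edge orientation makes $\widehat{T}$ transitive, then discharging the finitely many $2$-path patterns, is the crux; everything else (all-in-$V$, all-in-$V'$) is routine, and the representation-number / $\textit{prn}$ bookkeeping, if needed, follows from the induced-subgraph observations in Remark~\ref{com_reco}.
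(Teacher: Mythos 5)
There is a genuine gap: you never commit to a working orientation of the cross edges, and the one you analyze in detail is the wrong one. Writing $A = N_G(m)$ and $B = N_{G'}(m')$, you orient every cross edge from $A$ to $B$, and in your case (b) you correctly discover that this fails: if $\overrightarrow{xy}$ is in $T$ with $y \in A$ but $x \notin A$, and $z \in B$, then $\overrightarrow{xy},\overrightarrow{yz}$ is a directed $2$-path with no edge $\overline{xz}$ in $G \srec G'$. Your case (a) is also broken as written: from $\overrightarrow{yz}$ and $\overrightarrow{ym'}$, both pointing \emph{out of} $y$, transitivity of $T'$ yields nothing, and indeed $z$ need not lie in $B$ (take $G'$ a path $z - y - m'$ oriented $\overrightarrow{yz}$, $\overrightarrow{ym'}$... note even here $m'$ fails to be a sink in the relevant sense only after fixing orientations, but the point stands that $y<z$ and $y<m'$ do not force $z<m'$). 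Having diagnosed the failure, you stop and explicitly concede that identifying the correct cross-edge orientation is the unresolved crux; so the essential step of the proof is missing.

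The fix is to reverse your cross edges: orient them $\overrightarrow{ba}$ for $b \in B$, $a \in A$, which is exactly the orientation the paper uses (it obtains it by first forming the auxiliary graph $H$ of Theorem~\ref{lemma_1}, taking a transitive orientation of $H$ with $m$ a source and $m'$ a sink, and then deleting $m,m'$). The reason this direction works is that, $m$ being a source, $A = \{v \in V : m < v \text{ in } P_G\}$ is an up-set of $P_G$, while, $m'$ being a sink, $B$ is a down-set of $P_{G'}$. With cross edges oriented $B \to A$, no directed edge leaves $V$ into $V'$ and no $2$-path can use two cross edges, so only two mixed patterns remain: (i) $\overrightarrow{ba}$ followed by $\overrightarrow{az}$ in $T$, where $\overrightarrow{ma}$ and $\overrightarrow{az}$ give $\overrightarrow{mz}$ by transitivity of $T$, hence $z \in A$ and $\overrightarrow{bz}$ is present; and (ii) $\overrightarrow{xb}$ in $T'$ followed by $\overrightarrow{ba}$, where $\overrightarrow{xb}$ and $\overrightarrow{bm'}$ give $\overrightarrow{xm'}$ by transitivity of $T'$, hence $x \in B$ and $\overrightarrow{xa}$ is present. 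This completes, in a few lines, exactly the direct construction you set up; the all-in-$V$ and all-in-$V'$ cases are routine, as you say.
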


\begin{proof}
	By Theorem \ref{lemma_1}, $H$ is a comparability graph in which $m$ is a source and $m'$ is a sink. In view of the discussion in \cite[Section 3]{cornelsen2009treelike}, we give a transitive orientation to $G \srec G'$ from a transitive orientation of $H$ by deleting $m$ and $m'$ and by orienting $\overrightarrow{ba}$ for all adjacent vertices $a (\neq m')$ of $m$ and $b (\neq m)$ of $m'$.  	\qed
\end{proof}

\begin{remark}
	In Theorem \ref{rec_comparability}, if $m$ is a sink and $m'$ is a source with respect to some transitive orientations of $G$ and $G'$, respectively, then also a transitive orientation can be assigned to $G \srec G'$. Hence, in the rest of the paper, we consider one of the combinations to state the corresponding results.
\end{remark}	

Since every bipartite graph is a comparability graph, we have the following corollary.

\begin{corollary}\label{bipartite_recom}
	 If $G$ and $G'$ are two bipartite graphs, then for any choice of vertices $m$ and $m'$, the split recomposition $G \srec G'$ is a comparability graph.
\end{corollary}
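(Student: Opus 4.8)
The plan is to derive Corollary~\ref{bipartite_recom} as a direct consequence of Theorem~\ref{rec_comparability} together with the elementary fact that a connected bipartite graph admits an orientation in which any prescribed vertex is a source. First I would recall that a bipartite graph $G$ with parts $(X,Y)$ has a natural transitive orientation: orient every edge from $X$ to $Y$ (equivalently, from $Y$ to $X$); transitivity is vacuous since no directed path of length two exists, so this is indeed a transitive orientation of the comparability graph $G$. With respect to the orientation that sends all edges from $X$ to $Y$, every vertex of $X$ is a source and every vertex of $Y$ is a sink; with respect to the reverse orientation the roles swap. Hence for the marked vertex $m$ of $G$, whichever part of the bipartition $m$ lies in, we can choose the orientation of $G$ making $m$ a source.

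Next I would do the same for $G'$: choose a transitive orientation of the bipartite graph $G'$ in which the marked vertex $m'$ is a sink (again, just orient all edges toward the part containing $m'$, or away from it, as needed). At this point the hypotheses of Theorem~\ref{rec_comparability} are met --- $G$ and $G'$ are comparability graphs, $m$ is a source of a transitive orientation of $G$, and $m'$ is a sink of a transitive orientation of $G'$ --- so that theorem immediately gives that $G \srec G'$ is a comparability graph. One small wrinkle is connectivity: the paper works throughout with connected graphs, so if $G$ or $G'$ is disconnected one should first note that the argument can be applied componentwise, or simply invoke the standing assumption that the graphs under consideration are connected; in the connected case the bipartition $(X,Y)$ is uniquely determined up to swapping, which is exactly what allows the free choice of source/sink for the marked vertex.

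The claim as stated says ``for any choice of vertices $m$ and $m'$,'' and the above shows precisely that: no matter which vertex is designated as the marked vertex in each bipartite graph, we can orient so as to make $m$ a source and $m'$ a sink, because in a bipartite graph \emph{every} vertex is a source under one of the two all-edges-one-way orientations. So there is really no obstacle here --- the only thing to be careful about is to phrase the bipartite orientation correctly (all edges directed across the bipartition, giving a transitive orientation precisely because $G$ contains no induced $P_3$ oriented as a path, indeed no directed $2$-path at all) and to make sure that the source requirement on $m$ and the sink requirement on $m'$ can be satisfied simultaneously, which they trivially can since the choices for $G$ and $G'$ are independent. If one wished, one could additionally remark (as a parallel to Corollary~\ref{parity_graphs}) that this strengthens the conclusion: $G \srec G'$ is not merely a comparability graph but in fact bipartite, since its edge set is contained in the complete bipartite graph between $(X \setminus \{m\}) \cup (X' \setminus \{m'\})$-type parts --- but that is a separate observation and not needed for the statement at hand.
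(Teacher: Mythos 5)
Your proposal is correct and matches the paper's own proof: both orient all edges of $G$ away from the part containing $m$ (making $m$ a source) and all edges of $G'$ toward the part containing $m'$ (making $m'$ a sink), note these orientations are vacuously transitive, and then invoke Theorem~\ref{rec_comparability}. The extra remarks on connectivity and on the bipartite nature of the result are fine but not needed.
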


\begin{proof}
	Let $\{A, B\}$ be a bipartition of $G$ and $\{A', B'\}$ be a bipartition of $G'$. Without loss of generality, suppose $m \in A$ and $m' \in A'$. Assign orientation to the edges in $E$ as $\overrightarrow{ab}$ such that $a \in A, b \in B$ and the edges in $E'$ as $\overrightarrow{b'a'}$ such that $b' \in B', a' \in A'$. In this transitive orientation, $m$ is a source and $m'$ is a sink. Hence, the result follows from Theorem \ref{rec_comparability}. \qed
\end{proof}

Moreover, we have the following result for bipartite graphs.

\begin{proposition}\label{lemma_2}
	If $G$ and $G'$ are two bipartite graphs, then for any choice of vertices $m$ and $m'$, the split recomposition $G \srec G'$ is a bipartite graph.
\end{proposition}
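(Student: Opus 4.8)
The plan is to exhibit an explicit bipartition of $G \srec G'$ using the bipartitions of $G$ and $G'$, after adjusting which side the marked vertices sit on. Let $\{A, B\}$ be a bipartition of $G$ with $m \in A$ and $\{A', B'\}$ a bipartition of $G'$; by swapping the names of $A'$ and $B'$ if necessary, I may assume $m' \in B'$. The key observation is that every vertex of $N_G(m)$ lies in $B$ and every vertex of $N_{G'}(m')$ lies in $A'$. Hence the new edges introduced by the recomposition, namely $\{\overline{ab} \mid a \in N_G(m),\ b \in N_{G'}(m')\}$, all run between $B$ and $A'$.

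First I would set $X = A \cup A'$ and $Y = B \cup B'$ (note the vertex sets of $G$ and $G'$ are disjoint, and the marked vertices $m, m'$ do not belong to $G \srec G'$, so $X$ and $Y$ partition $V'' = V \cup V'$). The edges of $G \srec G'$ come in three families, per the definition: the edges of $G[V]$, the edges of $G'[V']$, and the cross edges described above. Edges of $G[V]$ go between $A \setminus \{m\} \subseteq X$ and $B \subseteq Y$; edges of $G'[V']$ go between $A' \subseteq X$ and $B' \setminus \{m'\} \subseteq Y$; and by the observation above, the cross edges go between $B \subseteq Y$ and $A' \subseteq X$. In every case the endpoints lie on opposite sides of $\{X, Y\}$, so $\{X, Y\}$ is a valid bipartition and $G \srec G'$ is bipartite.

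The only point requiring a word of justification is the claim $N_G(m) \subseteq B$ and $N_{G'}(m') \subseteq A'$, which is immediate from bipartiteness: a neighbor of $m \in A$ must lie in $B$, and a neighbor of $m' \in B'$ must lie in $A'$. There is no real obstacle here — the statement is essentially a bookkeeping check once the sides are chosen so that $m$ and $m'$ sit on opposite-labelled parts (the same choice already used implicitly in the proof of Corollary~\ref{bipartite_recom}). I would also remark that this proposition strengthens Corollary~\ref{bipartite_recom}, since bipartite graphs are comparability graphs, but it is worth stating separately because bipartiteness is the stronger conclusion.
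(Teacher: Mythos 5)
Your proof is correct and is essentially the paper's own argument: both exhibit an explicit bipartition of $G \srec G'$ assembled from the bipartitions of $G$ and $G'$, and your classes $\{A \cup A',\, B \cup B'\}$ (with $m \in A$, $m' \in B'$) coincide with the paper's $\{(A \setminus \{m\}) \cup B',\, (A' \setminus \{m'\}) \cup B\}$ after relabelling the parts of $G'$. The only difference is the cosmetic choice of which side $m'$ sits on, and you check the three edge families slightly more explicitly than the paper does.
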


\begin{proof}
	Let $\{A, B\}$ be a bipartition of $G$ and $\{A', B'\}$ be a bipartition of $G'$. Without loss of generality, suppose $m \in A$ and $m' \in A'$.
	We prove that $\{(A \setminus \{m\}) \cup B', (A' \setminus \{m'\}) \cup B\}$ is a bipartition of the vertices of $G \srec G'$. Consider an edge $\overline{ab}$ of $G \srec G'$. Then $\overline{ab}$ is in one of the induced subgraphs $G[V]$ and $G'[V']$, or it is an edge between $N_{G}(m)$ and $N_{G'}(m')$. In any case, it can be observed that if $a \in (A \setminus \{m\}) \cup B'$ then $b \in (A' \setminus \{m'\}) \cup B$, or vice versa.\qed
\end{proof}
 
\subsection{Permutation-Representation Number} 
 
In this subsection, we study the \textit{prn} of the graph $H$ and obtain a partial result on the \textit{prn} of $G \srec G'$. A complete result on the \textit{prn} of $G \srec G'$ is given in Subsection 4.3, subsequent to its characterization.  

\begin{lemma}\label{lemma_3}
	Let $G$ be a comparability graph with a transitive orientation $T$. If $m$ is a source with respect to $T$, then $m$ is a minimal element of the induced poset $P_G$. Similarly, if $m$ is a sink, then $m$ is a maximal element of $P_G$.   
\end{lemma}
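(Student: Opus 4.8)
The plan is to unwind the definitions and show directly that a source of a transitive orientation cannot have any element below it in the induced poset, and dually for a sink. Recall that the poset $P_G$ is defined from the transitive orientation $T$ by declaring $a < b$ in $P_G$ precisely when $\overrightarrow{ab} \in T$ (the relation is a strict partial order exactly because $T$ is transitive and acyclic). So the statement is essentially a translation between the graph-theoretic language (source/sink) and the order-theoretic language (minimal/maximal).

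First I would handle the source case. Suppose $m$ is a source with respect to $T$, i.e.\ the in-degree of $m$ is zero. I want to show $m$ is a minimal element of $P_G$, meaning there is no $a \in V \cup \{m\}$ with $a < m$ in $P_G$. If such an $a$ existed, then by the definition of $P_G$ we would have $\overrightarrow{am} \in T$, which is an edge directed \emph{into} $m$, contradicting that the in-degree of $m$ is zero. Hence no such $a$ exists and $m$ is minimal. The sink case is completely symmetric: if $m$ has out-degree zero, then $\overrightarrow{mb} \in T$ is impossible for every $b$, so there is no $b$ with $m < b$ in $P_G$, i.e.\ $m$ is maximal.

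There is no real obstacle here; the only thing to be slightly careful about is to state once, at the outset, the convention relating the orientation $T$ to the order on $P_G$ (direction of the edge versus direction of the inequality), since the reverse convention would swap ``source'' with ``maximal'' — but it would still give a correct statement after relabelling, and the paper's Lemma is stated with the convention that a source is minimal, so I would simply adopt that convention and proceed. The whole argument is two short paragraphs once the convention is fixed.

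\begin{proof}
	Recall that the poset $P_G$ induced by the transitive orientation $T$ is defined on $V \cup \{m\}$ by setting $a < b$ in $P_G$ if and only if $\overrightarrow{ab} \in T$; this is a strict partial order precisely because $T$ is transitive (and, being an orientation of a simple graph, has no directed $2$-cycles).

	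Suppose $m$ is a source with respect to $T$, so the in-degree of $m$ is zero. If $m$ were not minimal in $P_G$, there would be some $a$ with $a < m$ in $P_G$, that is, $\overrightarrow{am} \in T$. But then $m$ would have positive in-degree, a contradiction. Hence $m$ is a minimal element of $P_G$.

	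Similarly, suppose $m$ is a sink with respect to $T$, so the out-degree of $m$ is zero. If $m$ were not maximal in $P_G$, there would be some $b$ with $m < b$ in $P_G$, that is, $\overrightarrow{mb} \in T$, forcing $m$ to have positive out-degree, a contradiction. Hence $m$ is a maximal element of $P_G$. \qed
\end{proof}
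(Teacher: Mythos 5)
Your proof is correct and follows essentially the same route as the paper's: assume $m$ is not minimal, extract an element $a < m$, translate it to a directed edge $\overrightarrow{am}$ in $T$, and contradict the source condition (with the dual argument for sinks). The extra paragraph fixing the convention between the orientation and the order relation is a reasonable addition but does not change the argument.
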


\begin{proof}
	Let us assume that $m$ is not a minimal element of $P_G$. Then there exists $a$ in $P_G$ such that $a < m$ in $P_G$ which implies there is an edge $\overrightarrow{am}$ in $G$ with respect to $T$. This is a contradiction to $m$ is a source. Similarly, the result can be proved when $m$ is a sink. 
	\qed
\end{proof}

The following remark produces a word to represent a comparability graph based on a realizer of its induced poset.

\begin{remark}\label{per_lin_ext}
	Let $C$ be a comparability graph and $P_{C}$ be the poset induced by a  transitive orientation of $C$. Let $\{L_1, L_2, \ldots, L_k\}$ be a realizer of $P_C$. For $1 \le i \le k$, construct permutations $p_i$ from $L_i$ on the vertices of $C$ such that $a$ occurs before $b$ in the permutation $p_i$ if $a < b$ in $L_i$. Then, in view of \cite[Lemma 4]{MR2914710}, the word $ p_1p_2 \cdots p_k$ represents $C$.
\end{remark}

\begin{proposition}\label{prop_perm_ext}
	Let $G$ be a comparability graph such that $m$ is a source (or sink) with respect to some transitive orientation of $G$. Then there exist $k$ and permutations $p_i$ on $V$, for $1 \le i \le k$, which can be extended to permutations $q_i$ on $V \cup \{m\}$, for $1 \le i \le k+1$, such that $p_1\cdots p_k$ represents $G[V]$ and $q_1 \cdots q_kq_{k+1}$ represents $G$.
\end{proposition}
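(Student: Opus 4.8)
The plan is to start from a transitive orientation $T$ of $G$ in which $m$ is a source (the sink case being symmetric, handled by reversing all edges), and let $P_G$ be the induced poset. By Lemma \ref{lemma_3}, $m$ is a minimal element of $P_G$. Let $P$ denote the restriction of $P_G$ to $V$, i.e.\ the poset induced on $G[V]$ by the restricted orientation; then $P_G$ is obtained from $P$ by adjoining $m$ as a new minimal element that lies below exactly the elements of $N_G(m)$ (and is incomparable to the rest of $V$). The key observation is that a realizer of $P$ of size $k$ can be enlarged to a realizer of $P_G$ of size $k+1$ by a single extra linear order that ``repairs'' the incomparabilities introduced by $m$.

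Concretely, first I would take a realizer $\{L_1,\dots,L_k\}$ of $P$ of size $k=\dim(P)=\mathcal{R}^p(G[V])$. For each $i$, extend $L_i$ to a linear order $\widehat{L}_i$ on $V\cup\{m\}$ by placing $m$ at the very bottom (below all of $V$). Each $\widehat{L}_i$ is a linear extension of $P_G$, since $m$ is below everything in $P_G$ in particular below $N_G(m)$, and all order relations within $V$ are preserved. However, $\{\widehat{L}_1,\dots,\widehat{L}_k\}$ need not be a realizer of $P_G$: for $a\in V\setminus N_G(m)$, the pair $(m,a)$ is incomparable in $P_G$ but $m<a$ in every $\widehat{L}_i$. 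So I would add one more linear order $L_{k+1}$ that is a linear extension of $P_G$ in which $m$ is placed \emph{above} every element of $V\setminus N_G(m)$ while still sitting below every element of $N_G(m)$; such an extension exists because $m<a$ in $P_G$ only for $a\in N_G(m)$, so putting $m$ immediately above the down-set generated by $V\setminus N_G(m)$ (more carefully, using any topological sort that delays $m$ as long as the constraints from $N_G(m)$ allow) respects all relations of $P_G$. Then for any incomparable pair $(m,a)$ with $a\notin N_G(m)$, the orders $\widehat L_1$ and $L_{k+1}$ disagree on $m$ versus $a$; for incomparable pairs within $V$, the $\widehat L_i$ already disagree; and every genuine relation of $P_G$ holds in all $k+1$ orders. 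Hence $\{\widehat L_1,\dots,\widehat L_k,L_{k+1}\}$ realizes $P_G$.

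To finish, I would invoke Remark \ref{per_lin_ext}: let $p_i$ be the permutation on $V$ read off from $L_i$ (so $p_1\cdots p_k$ represents $G[V]$), and let $q_i$ be the permutation on $V\cup\{m\}$ read off from $\widehat L_i$ for $1\le i\le k$ and from $L_{k+1}$ for $i=k+1$. Since $\widehat L_i$ restricts to $L_i$ on $V$, $q_i$ is an extension of $p_i$ for $1\le i\le k$, and by Remark \ref{per_lin_ext} the word $q_1\cdots q_kq_{k+1}$ represents $G$ because $\{\widehat L_1,\dots,\widehat L_k,L_{k+1}\}$ is a realizer of $P_G$. This yields the desired $k$ and permutations. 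The sink case is obtained by applying the source case to the reverse orientation and reversing all the linear orders.

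The main obstacle, and the step deserving the most care, is verifying that the single added linear order $L_{k+1}$ can simultaneously (i) be a genuine linear extension of $P_G$ and (ii) reverse the $m$-versus-$a$ order for \emph{all} $a\in V\setminus N_G(m)$ at once. The point is that the only constraints forcing something below $m$ are vacuous ($m$ is minimal) and the only constraints forcing $m$ below something come from $N_G(m)$ and its up-set; so there is no obstruction to pushing $m$ up past the entire antichain-complement of $N_G(m)$'s down-set. I would make this precise by describing $L_{k+1}$ as follows: take any linear extension of $P_G$, then repeatedly move $m$ upward past any immediately preceding element not in the principal up-set of $m$ in $P_G$; this process terminates with $m$ above every element of $V\setminus N_G(m)$ and still below $N_G(m)$, and the result is still a linear extension. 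One should also note the edge case $k=0$ (i.e.\ $G[V]$ edgeless or a single vertex), where the construction still goes through with the convention that an empty concatenation / single permutation behaves as in Remark \ref{per_lin_ext}.
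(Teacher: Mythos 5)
Your argument is essentially the paper's: restrict to the subposet $P_{G[V]}$, take a realizer $\{L_1,\dots,L_k\}$, extend it to a realizer of $P_G$ by placing $m$ at the bottom of each $L_i$ and adjoining one extra linear order with $V\setminus N_G(m) < m < N_G(m)$, and then read off the permutations via Remark~\ref{per_lin_ext}; the paper obtains the extra order from Hiraguchi's extension theorem, which you instead justify directly. One caveat: your ``take any linear extension of $P_G$ and greedily push $m$ upward past adjacent elements outside its up-set'' recipe for $L_{k+1}$ does not in general end with every element of $V\setminus N_G(m)$ below $m$ (an element incomparable to $m$ may already sit above some $b\in N_G(m)$ in the chosen extension, and the process halts as soon as $m$'s neighbour lies in the up-set), so you should rely on your block description --- all of $V\setminus N_G(m)$, then $m$, then $N_G(m)$, each block ordered by a linear extension of $P_{G[V]}$ --- which is a valid linear extension precisely because transitivity forbids any element of $N_G(m)$ from lying below an element of $V\setminus N_G(m)$ in $P_G$.
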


\begin{proof}
	Consider the poset $P_G$ induced by the transitive orientation of $G$ for which $m$ is a source. Note that $m$ is a minimal element of $P_{G}$. Now consider the subposet of $P_{G}$ with the elements of $V$ and note that the corresponding comparability graph is $G[V]$. Accordingly, we denote the subposet by $P_{G[V]}$.
	
	Consider a realizer $L = \{L_1, L_2, \ldots, L_k\}$ of $P_{G[V]}$. Using the construction in \cite[Theorem 4.2]{Hiraguchiresult} (see also \cite[Theorem 9.3 in Chapter 1]{Trotterbook} and \cite[Theorem 2.11]{bogart1973maximal}), we extend $L$ to a realizer $M = \{M_1, M_2, \ldots, M_k, M_{k+1}\}$ of $P_{G}$ as per the following:	Let $U(m)= \{a \in V \mid a > m  \text{ in }  P_{G}\}$ and set
	\begin{align*}
		M_i &= m < L_i, \quad (1 \le i \le k); \\
		M_{k+1} &= L_k(V \setminus U(m)) < m < L_k(U(m)),
	\end{align*}
	where, for $X \subseteq V$, $L_k(X)$ is the linear order on $X$ obtained from $L_k$ by restricting it to $X$.     
	
	For $1 \le i \le k$, construct permutations $p_i$ from $L_i$ such that $a$ occurs before $b$ in $p_i$ if $a < b$ in $L_i$. Similarly, for $1 \le i \le k+1$, construct permutations $q_i$  from $M_i$. Note that, as $m$ is a minimal element of $P_{G}$, we have $N_{G}(m) = U(m)$. Accordingly, note that 	
	\begin{align*}
		q_{i} &= mp_{i}, \quad(1 \le i \le k); \\
		q_{k + 1} &= p_{{k}_{V \setminus N_{G}(m)}}m p_{{k}_{N_{G}(m)}}.	
	\end{align*}
	Hence, by Remark \ref{per_lin_ext}, the words $p_1p_2 \cdots p_{k}$ and $q_1q_2 \cdots q_{k}q_{k + 1}$ represent the graphs $G[V]$ and  $G$, respectively. 
	
	Similarly, if $m$ is a sink, there exist permutations $p_i$ on $V$, for $1 \le i \le k$, and permutations $q_i$ on $V \cup \{m\}$, for $1 \le i \le k+1$, such that the words $p_1p_2 \cdots p_{k}$ and  $q_1q_2 \cdots q_{k}q_{k + 1}$ represent $G[V]$ and $G$, respectively, where the permutations $q_i$ are as per the following: 	
	\begin{align*}
		q_{i} &= p_{i}m, \quad(1 \le i \le k); \\
		q_{k + 1} &= p_{{k}_{N_{G}(m)}}mp_{{k}_{V \setminus  N_{G}(m)}}.
	\end{align*}
	\qed
\end{proof}

Note that $\mathcal{R}^p(G) = k$ implies $\dim(P_G) = k$ so that $\dim(P_{G[V]}) \le k$. Accordingly, Proposition \ref{prop_perm_ext} is rephrased in the following lemma for the graphs $G$ and $G'$. The lemma is useful for obtaining the \textit{prn} of their recomposition and also that of $H$.

\begin{lemma}\label{const_perm_GG'}
		Suppose $G$ and $G'$ are comparability graphs such that $m$ is a source of $G$ and $m'$ is a sink of $G'$ with respect to some transitive orientations of $G$ and $G'$. Let $\mathcal{R}^p(G) =k$ and $\mathcal{R}^p(G') =k'$. Then there exist permutations $p_i$  $(1 \le i \le k)$ on $V$, $p'_i$  $(1 \le i \le k')$ on $V'$ which can be extended to the permutations $q_i$ $(1 \le i \le k+1)$ on the vertices of $G$ and permutations $q'_i$ $(1 \le i \le k'+1)$ on the vertices of $G'$ such that the words $p_1p_2 \cdots p_{k}$ and $p'_1p'_2 \cdots p'_{k'}$ represent $G[V]$ and $G'[V']$, respectively and the words $w = q_1q_2 \cdots q_{k}q_{k + 1}$ and $w' = q'_1q'_2 \cdots q'_{k'}q'_{k' + 1}$ represent the graphs $G$ and $G'$, respectively, where		
		\[\begin{array}{rlrl}
			q_{i} &= mp_{i}, \quad(1 \le i \le k); \quad \quad & q'_{i} &= p'_{i}m', \quad(1 \le i \le k'); \\
			q_{k + 1} &= p_{{k}_{V \setminus N_{G}(m)}}m p_{{k}_{N_{G}(m)}}; \quad \quad & q'_{k' + 1} &= p'_{{k'}_{N_{G'}(m')}}m'p'_{{k'}_{V' \setminus  N_{G'}(m')}}. \\
		\end{array}\]
\end{lemma}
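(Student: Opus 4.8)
The plan is to deduce Lemma~\ref{const_perm_GG'} directly from Proposition~\ref{prop_perm_ext} by applying it once to $G$ and once to $G'$, after reconciling the numerical hypotheses. First I would invoke the hypothesis $\mathcal{R}^p(G)=k$: by the relationship between the \textit{prn} of a comparability graph and the dimension of its induced poset (recalled in the Preliminaries, following \cite{MR2914710,khyodeno2}), this means every transitive orientation of $G$ yields a poset of dimension $k$; in particular the orientation for which $m$ is a source gives $\dim(P_G)=k$, hence $\dim(P_{G[V]})\le k$ since $P_{G[V]}$ is a subposet (equivalently, $G[V]$ is an induced subgraph of $G$ and the \textit{prn} is hereditary). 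So $G[V]$ has a realizer of size at most $k$, i.e.\ there is a word $p_1\cdots p_{k}$ of permutations on $V$ representing $G[V]$ — padding with repeated permutations if the dimension is strictly less than $k$, which does not change the represented graph.

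Next I would feed this into the construction of Proposition~\ref{prop_perm_ext} (which is itself built on the Hiraguchi-type extension of \cite{Hiraguchiresult,Trotterbook,bogart1973maximal}): since $m$ is a source of the chosen transitive orientation of $G$, Lemma~\ref{lemma_3} tells us $m$ is a minimal element of $P_G$, and the proposition produces permutations $q_i$ on $V\cup\{m\}$ for $1\le i\le k+1$ with $q_i = m p_i$ for $1\le i\le k$ and $q_{k+1} = p_{{k}_{V\setminus N_G(m)}}\,m\,p_{{k}_{N_G(m)}}$, such that $q_1\cdots q_{k+1}$ represents $G$. This is exactly the left-hand column of the displayed formulas in the statement. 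Symmetrically, $m'$ is a sink of the chosen transitive orientation of $G'$, so by Lemma~\ref{lemma_3} it is a maximal element of $P_{G'}$, and the "sink" case of Proposition~\ref{prop_perm_ext} (with $k$ replaced by $k'=\mathcal{R}^p(G')$, again padding realizers as needed) gives permutations $p'_i$ on $V'$ for $1\le i\le k'$ and their extensions $q'_i$ on $V'\cup\{m'\}$ with $q'_i = p'_i m'$ for $1\le i\le k'$ and $q'_{k'+1} = p'_{{k'}_{N_{G'}(m')}}\,m'\,p'_{{k'}_{V'\setminus N_{G'}(m')}}$, such that $p'_1\cdots p'_{k'}$ represents $G'[V']$ and $q'_1\cdots q'_{k'+1}$ represents $G'$. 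Concatenating the two sets of conclusions and renaming $w=q_1\cdots q_{k+1}$, $w'=q'_1\cdots q'_{k'+1}$ yields precisely the statement of the lemma.

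This lemma is essentially a bookkeeping restatement, so there is no deep obstacle; the only point that needs a word of care is the passage from "$\dim(P_{G[V]})\le k$" to "there exist permutations $p_1,\dots,p_k$" — Proposition~\ref{prop_perm_ext} as stated produces some $k$ (the size of a minimal realizer of $P_{G[V]}$), and one must note that if this size is strictly smaller than $\mathcal{R}^p(G)$ we simply repeat the last permutation to reach exactly $k$ terms, which is harmless for word-representation and keeps the indexing in the displayed formulas valid. I would state this padding observation explicitly and then write the proof as: "Apply Proposition~\ref{prop_perm_ext} to $G$ with $m$ a source and to $G'$ with $m'$ a sink, using that $\mathcal{R}^p(G)=k$ and $\mathcal{R}^p(G')=k'$ bound the dimensions of $P_{G[V]}$ and $P_{G'[V']}$ from above, padding realizers to the required lengths; the asserted formulas for $q_i$ and $q'_i$ are exactly those delivered by the source case and the sink case of the proposition." \qed
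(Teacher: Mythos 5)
Your proposal is correct and follows essentially the same route as the paper: the authors also obtain this lemma by noting that $\mathcal{R}^p(G)=k$ gives $\dim(P_{G[V]})\le k$ (and likewise for $G'$) and then invoking Proposition~\ref{prop_perm_ext} for the source case on $G$ and the sink case on $G'$. Your explicit remark about padding a smaller realizer of $P_{G[V]}$ up to exactly $k$ linear orders is a detail the paper leaves implicit, and it is handled correctly.
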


	\begin{theorem}\label{lemma_4}
		Let $G$ and $G'$ be comparability graphs such that  $\mathcal{R}^p(G) = k$ and $\mathcal{R}^p(G') = k'$.  If the graph $H$ is a comparability graph, then $\mathcal{R}^p(H) = \max\{k, k'\}$ or $1 + \max\{k, k'\}$.
	\end{theorem}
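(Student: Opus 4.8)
The plan is to squeeze $\mathcal{R}^p(H)$ between $\max\{k,k'\}$ and $1+\max\{k,k'\}$. The lower bound is immediate: since $H[V\cup\{m\}]=G$ and $H[V'\cup\{m'\}]=G'$, both $G$ and $G'$ are induced subgraphs of the comparability graph $H$, so by heredity of the \textit{prn}, $\mathcal{R}^p(H)\ge\max\{\mathcal{R}^p(G),\mathcal{R}^p(G')\}=\max\{k,k'\}$. Everything else is the upper bound $\mathcal{R}^p(H)\le 1+\max\{k,k'\}$, which I would prove by exhibiting a realizer of $P_H$ of size $t+1$, where $t:=\max\{k,k'\}$.

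First I set the stage with Theorem~\ref{lemma_1}: as $H$ is a comparability graph, I may assume there are transitive orientations $T$ of $G$ and $T'$ of $G'$ with $m$ a source of $T$ and $m'$ a sink of $T'$, and (from that proof) a transitive orientation $T_H$ of $H$ extending them in which $m$ is a source and $m'$ a sink of $H$. Write $P_G=P_{G,T}$, $P_{G'}=P_{G',T'}$, $P_H=P_{H,T_H}$. By Lemma~\ref{lemma_3}, $m$ is minimal in $P_G$ and $m'$ maximal in $P_{G'}$. A short transitivity argument, using that $m$ is a source of $T_H$, shows that $P_H$ is exactly the disjoint union of $P_G$ and $P_{G'}$ together with the single added relation $m<m'$; in particular every $a\in V$ is incomparable in $P_H$ to every element of $V'\cup\{m'\}$, and $m$ is incomparable to every $b\in V'$. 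Since $\dim(P_{G,T})$ is independent of the transitive orientation $T$ and equals $\mathcal{R}^p(G)=k$ (and similarly $\dim(P_{G'})=k'$), both $P_G$ and $P_{G'}$ admit realizers of size $t$; fix $\{L_1,\dots,L_t\}$ for $P_G$ and $\{L'_1,\dots,L'_t\}$ for $P_{G'}$, padding with repeated linear extensions if $k$ or $k'$ is smaller than $t$.

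Now I build $t+1$ linear extensions of $P_H$. For $1\le i\le t$, let $N_i$ place all of $V\cup\{m\}$ (ordered by $L_i$) below all of $V'\cup\{m'\}$ (ordered by $L'_i$); each $N_i$ extends $P_G$, extends $P_{G'}$, and has $m<m'$, so it is a linear extension of $P_H$. The last one, $N_{t+1}$, is the crux: it must reverse, relative to $N_1,\dots,N_t$, all orderings between $V$-side and $V'$-side vertices, while still obeying the forced relation $m<m'$. I would take $N_{t+1}$ to list, in order, all of $V'$ by a linear extension $L'_\ast$ of $P_{G'[V']}$, then $m$, then $m'$, then all of $V$ by a linear extension $L_\ast$ of $P_{G[V]}$. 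Its restriction to $V\cup\{m\}$ is ``$m$ then $L_\ast$'', a linear extension of $P_G$ because $m$ is minimal there; its restriction to $V'\cup\{m'\}$ is ``$L'_\ast$ then $m'$'', a linear extension of $P_{G'}$ because $m'$ is maximal there; and $m<m'$ holds by construction, so $N_{t+1}$ is a linear extension of $P_H$. Finally I would check that $\{N_1,\dots,N_{t+1}\}$ realizes $P_H$: any pair incomparable inside $V\cup\{m\}$ (resp.\ inside $V'\cup\{m'\}$) is already ordered both ways among $N_1,\dots,N_t$, since $N_i$ restricts to $L_i$ (resp.\ $L'_i$); and for each incomparable cross-pair ($a\in V$ vs.\ $b\in V'$; $m$ vs.\ $b\in V'$; $a\in V$ vs.\ $m'$), $N_1$ orders the $V$-side element below and $N_{t+1}$ orders it above. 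Hence $\dim(P_H)\le t+1$, i.e., $\mathcal{R}^p(H)\le 1+\max\{k,k'\}$; combined with the lower bound, $\mathcal{R}^p(H)$ equals $\max\{k,k'\}$ or $1+\max\{k,k'\}$.

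I expect the main obstacle to be precisely the design and verification of $N_{t+1}$: the three demands on it — reversing every cross-ordering, retaining $m<m'$, and restricting to honest linear extensions of $P_G$ and $P_{G'}$ — pull against each other, and it is exactly the minimality of $m$ in $P_G$ together with the maximality of $m'$ in $P_{G'}$ (supplied by Theorem~\ref{lemma_1} via Lemma~\ref{lemma_3}) that makes it possible to thread $m$ and $m'$ through the middle of $N_{t+1}$. A secondary point needing care is the clean description of $P_H$ as $P_G\sqcup P_{G'}$ with the added relation $m<m'$, which relies on $m$ being a source of $T_H$, and the routine padding of realizers to the common size $t$.
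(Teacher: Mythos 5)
Your proof is correct, and it reaches the same two bounds by the same underlying mechanism as the paper: the lower bound $\mathcal{R}^p(H)\ge\max\{k,k'\}$ from $G$ and $G'$ being induced subgraphs of $H$, and the upper bound from $\max\{k,k'\}+1$ layers, where the source/sink property of $m$ and $m'$ supplied by Theorem \ref{lemma_1} (via Lemma \ref{lemma_3}) is exactly what lets one extra layer reverse all cross-pairs while keeping the forced relation $m<m'$. The difference is one of language and packaging. The paper stays entirely in the word/alternation framework: it invokes Lemma \ref{const_perm_GG'} to get the extended permutations $q_i$, $q'_i$, builds explicit permutations $u_i$ with $mm'$ threaded through each, and checks alternation of every pair in the concatenation. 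You instead pass to $\dim(P_H)$ via the \textit{prn}--dimension equivalence and exhibit a realizer $\{N_1,\dots,N_{t+1}\}$ directly. Your description of $P_H$ as $P_G\sqcup P_{G'}$ together with the single relation $m<m'$ is accurate (the only cross edge of $H$ is $\overline{mm'}$, and transitivity is not violated precisely because $m$ is minimal in $P_G$ and $m'$ maximal in $P_{G'}$), and your $N_{t+1}$, placing $V'<m<m'<V$, does all three jobs you list. What your route buys is a shorter verification --- incomparable pairs split cleanly into ``within a side,'' handled by the padded realizers in $N_1,\dots,N_t$, and ``cross,'' handled by $N_1$ versus $N_{t+1}$ --- at the cost of leaning on the standard fact that $\dim(P_{G,T})$ does not depend on the chosen transitive orientation $T$ (which the paper also implicitly uses when identifying $\mathcal{R}^p$ with $\dim$). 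The paper's route keeps everything in terms of representing words, reusing its Lemma \ref{const_perm_GG'}, but pays for this with a longer case analysis.
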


\begin{proof}
	Suppose $H$ is a comparability graph. By Theorem \ref{lemma_1}, $m$ is a source and  $m'$ is a sink with respect to some transitive orientations of $G$ and $G'$, respectively.
	
	As $\mathcal{R}^p(G) = k$ and $\mathcal{R}^p(G') = k'$, using Lemma \ref{const_perm_GG'}, consider the permutations $p_i$ $(1 \le i \le k)$ on $V$, $p'_i$ $(1 \le i \le k')$ on $V'$, $q_i$ $(1 \le i \le k+1)$ on the vertices of $G$, and $q'_i$ $(1 \le i \le k'+1)$ on the vertices of $G'$ such that the words $w = q_1q_2 \cdots q_{k}q_{k + 1}$ and $w' = q'_1q'_2 \cdots q'_{k'}q'_{k' + 1}$ represent the graphs $G$ and $G'$, respectively.
	
	Suppose $k' \le k$. We construct $k+1$ number of permutations $u_i$ ($1 \le i \le k+1$) on the vertices of $H$ as per the following:
	\begin{align*}
		u_{1} & = p'_1mp_1m', \\
		u_{i} & = p'_imm'p_i, \quad(2 \le i \le k') \\
		u_{j} & = p'_{k'}mm'p_{j}, \quad(k' + 1 \le j \le k) \\
		u_{k + 1} & = q_{k + 1}q'_{k' + 1} \\
		& =p_{{k}_{V \setminus N_{G}(m)}}m p_{{k}_{N_{G}(m)}}p'_{{k'}_{N_{G'}(m')}}m'p'_{{k'}_{V' \setminus  N_{G'}(m')}}
	\end{align*}
	
	We now show that the word $u = u_1u_2 \cdots u_{k}u_{k + 1}$ represents the graph $H$.
	Note that $H[V \cup \{m\}] = G$ and $H[V' \cup \{m'\}] = G'$. Further, since	
	\begin{align*}
		u_{V \cup \{m\}} & = q_1q_2\; \cdots \; q_kq_{k+1}, \text{and} \\
		u_{V' \cup \{m'\}} & = q'_1q'_2\; \cdots \;q'_{k'-1} \overbrace{q'_{k'}\; \cdots \; q'_{k'}}^{k-k'+1}q'_{k'+1},
	\end{align*}
     we have $u_{V \cup \{m\}}$ and $u_{V' \cup \{m'\}}$ represent the graphs $G$ and $G'$, respectively. Hence, any two vertices of $G$ or of $G'$ are adjacent if and only if they alternate in the word $u$. 
	Also, corresponding to the edge $\overline{mm'}$ in $H$,  it is easy to see that $m$ and $m'$ alternate in $u$, as $mm' \preceq u_i$ for all $1 \le i \le k + 1$. 
	
	Note that there are no edges between the graphs $G$ and $G'$ except $\overline{mm'}$; accordingly, we show that the corresponding vertices do not alternate in $u$.
	For $a \in V$ and $a' \in V' \cup \{m'\}$, as $a'a \preceq u_2$ but $aa' \preceq u_{k + 1}$, it is evident that $a$ and $a'$ do not alternate in $u$. Also, for $a \in V \cup \{m\}$ and $a' \in V'$, we have $a$ and $a'$ do not alternate in $u$ as $a'a \preceq u_1$ but $aa' \preceq u_{k + 1}$. Hence, the word $u$ represents the graph $H$.
	
    Similarly, when $k \le k'$, we can show that the word $u = u_1u_2 \cdots u_{k'}u_{k' + 1}$ represents the graph $H$, where
	
	\begin{align*}
		u_{1} & = p'_1mp_1m'; \\
		u_{i} & = p'_imm'p_i, \quad(2 \le i \le k); \\
		u_{j} & = p'_{j}mm'p_{k}, \quad(k + 1 \le j \le k'); \\
		u_{k' + 1} & = q_{k + 1}q'_{k' + 1} \\
		& =p_{{k}_{V \setminus N_{G}(m)}}m p_{{k}_{N_{G}(m)}}p'_{{k'}_{N_{G'}(m')}}m'p'_{{k'}_{V' \setminus  N_{G'}(m')}}.
	\end{align*}

	 In any case, since $u$ is a concatenation of $1 + \max\{k, k'\}$ number of permutations, we have $\mathcal{R}^p(H) \le 1 + \max\{k, k'\}$. Also, since $G$ and $G'$ are induced subgraphs of $H$, we have  $\mathcal{R}^p(H) \ge \max\{k, k'\}$. Hence, $\mathcal{R}^p(H) = \max\{k, k'\}$ or $1 + \max\{k, k'\}$. \qed 
\end{proof}

\begin{theorem}\label{theorem_7}
	Let $G$ and $G'$ be comparability graphs such that $m$ is a source of $G$ and $m'$ is a sink of $G'$ with respect to some transitive orientations of $G$ and $G'$. If $\mathcal{R}^p(G) = k$ and $\mathcal{R}^p(G') = k'$, then $\mathcal{R}^p(G \srec G') = \max\{k, k'\}$ or $1+ \max\{k, k'\}$.
\end{theorem}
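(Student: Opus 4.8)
The plan is to pin $\mathcal{R}^p(G \srec G')$ down to one of the two claimed values by an upper and a lower bound that differ by $1$. By Theorem~\ref{rec_comparability}, $G \srec G'$ is a comparability graph, so $\mathcal{R}^p(G \srec G')$ is defined; and by Remark~\ref{com_reco}, $G$ and $G'$ are (isomorphic to) induced subgraphs of $G \srec G'$, whence $\mathcal{R}^p(G \srec G') \ge \max\{k,k'\}$. It therefore suffices to exhibit a concatenation of $1 + \max\{k,k'\}$ permutations on $V \cup V'$ that represents $G \srec G'$; I would assume without loss of generality that $k' \le k$, handling $k \le k'$ by the mirror construction.

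The construction I would use is a surgery on the word representing $H$ built in the proof of Theorem~\ref{lemma_4}. From there, take the permutations $p_i$ $(1 \le i \le k)$ on $V$, $p'_i$ $(1 \le i \le k')$ on $V'$, and the permutations $u_1, \dots, u_{k+1}$ on the vertices of $H$ with $u = u_1 \cdots u_{k+1}$ representing $H$. Define permutations $v_1, \dots, v_{k+1}$ on $V \cup V'$ by deleting $m$ and $m'$ from every $u_i$ and, in addition, interchanging inside $u_{k+1}$ the two (then adjacent) blocks $p_{k_{N_G(m)}}$ and $p'_{k'_{N_{G'}(m')}}$. Explicitly, $v_i = p'_{\min\{i,k'\}}\, p_i$ for $1 \le i \le k$, and
\[
v_{k+1} = p_{k_{V \setminus N_G(m)}}\; p'_{k'_{N_{G'}(m')}}\; p_{k_{N_G(m)}}\; p'_{k'_{V' \setminus N_{G'}(m')}}.
\]
Each $v_i$ is a permutation of $V \cup V'$, so $v = v_1 v_2 \cdots v_{k+1}$ is a concatenation of $1 + \max\{k,k'\}$ permutations, and the claim to be proved is that $v$ represents $G \srec G'$; this yields $\mathcal{R}^p(G \srec G') \le 1 + \max\{k,k'\}$ and, with the lower bound, the theorem.

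To verify that $v$ represents $G \srec G'$, the key point I would isolate is that neither erasing $m, m'$ nor swapping those two blocks of $u_{k+1}$ changes the relative order of a pair of letters $a, b \in V \cup V'$ unless one of them lies in $N_G(m)$ and the other in $N_{G'}(m')$; consequently $v_{\{a,b\}} = u_{\{a,b\}}$ for every other pair. Since $u$ represents $H$, and since $H$ and $G \srec G'$ have exactly the same adjacencies on all pairs of $V \cup V'$ outside $N_G(m) \times N_{G'}(m')$ (within $V$ both induce $G[V]$, within $V'$ both induce $G'[V']$, and a vertex of $V$ is non-adjacent to a vertex of $V'$ in both graphs when they are not both marked neighbours), such a pair alternates in $v$ precisely when it is adjacent in $G \srec G'$. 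For a pair $a \in N_G(m)$, $b \in N_{G'}(m')$, a direct count shows that each $v_i$ with $i \le k$ contributes $ba$ to $v_{\{a,b\}}$ (the $V'$-block precedes the $V$-block) and that $v_{k+1}$ also contributes $ba$ (the relocated block $p'_{k'_{N_{G'}(m')}}$ now precedes $p_{k_{N_G(m)}}$), so $v_{\{a,b\}} = (ba)^{k+1}$ is alternating — matching the fact that $a$ and $b$ are adjacent in $G \srec G'$. Hence $v$ represents $G \srec G'$.

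The main obstacle, and the reason one cannot simply reuse the word for $H$, is that $G \srec G'$ is not an induced subgraph of $H$: the ``cross'' adjacencies are reversed, since in $H$ the only edge joining $G$ to $G'$ is $\overline{mm'}$, whereas in $G \srec G'$ it is the whole of $N_G(m)$ that is joined to $N_{G'}(m')$. In $u$ the pairs in $N_G(m) \times N_{G'}(m')$ fail to alternate (as they must, being non-adjacent in $H$), so deleting $m$ and $m'$ alone would leave them non-alternating, which is wrong for $G \srec G'$. Swapping the blocks $p_{k_{N_G(m)}}$ and $p'_{k'_{N_{G'}(m')}}$ in the last permutation is exactly the move that turns the final contribution of such a pair from $ab$ into $ba$, restoring alternation there; the one slightly delicate part of the write-up is confirming that this swap leaves the behaviour of all other pairs — pairs inside $V$, pairs inside $V'$, and cross pairs with at most one endpoint a marked neighbour — unchanged, which follows because swapping two adjacent segments of a permutation only reverses order across the two segments.
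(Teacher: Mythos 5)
Your construction is exactly the word $v = v_1\cdots v_{k+1}$ that the paper builds (your $v_i = p'_{\min\{i,k'\}}p_i$ coincides with its $p'_ip_i$ for $i\le k'$ and $p'_{k'}p_j$ for $j>k'$, and your final permutation is identical), and the upper/lower bound structure is the same, so the proposal is correct and takes essentially the same approach. The only cosmetic difference is that you verify correctness by tracking how the surgery on the $H$-word of Theorem~\ref{lemma_4} changes pairwise alternation, whereas the paper checks $v_V$, $v_{V'}$ and the cross pairs directly from Lemma~\ref{const_perm_GG'}.
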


\begin{proof}
	 Let us assume that  $m$ is a source of $G$ and $m'$ is a sink of $G'$ with respect to some transitive orientations of $G$ and $G'$. Then, by Theorem \ref{rec_comparability}, $G \srec G'$ is a comparability graph and hence permutationally representable. As $\mathcal{R}^p(G) = k$ and $\mathcal{R}^p(G') = k'$, using Lemma \ref{const_perm_GG'}, consider the permutations $p_i$ $(1 \le i \le k)$ on $V$, $p'_i$ $(1 \le i \le k')$ on $V'$, $q_i$ $(1 \le i \le k+1)$ on the vertices of $G$, and $q'_i$ $(1 \le i \le k'+1)$ on the vertices of $G'$ such that the words $w = q_1q_2 \cdots q_{k}q_{k + 1}$ and $w' = q'_1q'_2 \cdots q'_{k'}q'_{k' + 1}$ represent the graphs $G$ and $G'$, respectively.
	
	Suppose $k \le k'$. We construct $k' +1 $ number of permutations $v_i$ ($1 \le i \le k' + 1$) on the vertices of $G \srec G'$ as per the following:  
	
	\begin{align*}
		v_{i} & = p'_ip_i, \quad(1 \le i \le k) \\
		v_{j} & = p'_jp_{k}, \quad(k + 1 \le j \le k') \\
		v_{k' + 1} & = p_{{k}_{V \setminus N_{G}(m)}}p'_{{k'}_{N_{G'}(m')}}p_{{k}_{N_{G}(m)}}p'_{{k'}_{V' \setminus N_{G'}(m')}}
	\end{align*}
	
   We now show that the word $v = v_1v_2 \cdots v_{k'}v_{k'+1}$ represents the graph $G \srec G'$.	Note that $G[V]$ and $G'[V']$ are induced subgraphs of $G \srec G'$ represented by
   \begin{align*}
     	w_{V} & =  p_1p_2 \; \cdots \; p_{k} p_{{k}_{V \setminus N_{G}(m)}}p_{{k}_{N_{G}(m)}}, \text{and} \\
     	w'_{V'} & = p'_1p'_2 \; \cdots \; p'_{k'}p'_{{k'}_{N_{G'}(m')}}p'_{{k'}_{V' \setminus  N_{G'}(m')}},
   \end{align*}
      respectively. Further, note that
      \begin{align*}
      	v_{V} & = p_1p_2 \; \cdots \; p_{k-1} \overbrace{p_k \; \cdots \; p_k}^{k' - k +1}p_{{k}_{V \setminus N_{G}(m)}}p_{{k}_{N_{G}(m)}}, \text{and} \\
      	v_{V'} & = p'_1p'_2 \; \cdots \; p'_{k'}p'_{{k'}_{N_{G'}(m')}}p'_{{k'}_{V' \setminus  N_{G'}(m')}}.
      \end{align*}
      
      Thus, any two vertices of $G[V]$ or of $G'[V']$ are adjacent if and only if they alternate in the word $v$.  
	
	For $a \in V$ and $a' \in V'$, note that $a$ and $a'$ are adjacent in $G \srec G'$ if and only if $a \in N_G(m)$ and $a' \in N_{G'}(m')$. Accordingly, we consider the following cases to show that $a$ and $a'$ are adjacent in $G \srec G'$ if and only if they alternate in $v$. If $a \in N_G(m)$ and $a' \in N_{G'}(m')$, then from the construction of $v$, we have $a'a \preceq v_i$ for all $1 \le i \le k' + 1$, so that $a$ and $a'$ alternate in $v$. If $a \notin N_{G}(m)$, then for any $a' \in V'$, we have $a'a \preceq v_1$ but $aa' \preceq v_{k' + 1}$.  Similarly, if $a' \notin N_{G'}(m')$, then for any vertex $a \in V$, we have $a'a \preceq v_1$ but $aa' \preceq v_{k' + 1}$. Hence, $a$ and $a'$ do not alternate in $v$, if $a \notin N_G(m)$ or $a' \notin N_{G'}(m')$. Hence the word $v$ represents $G \srec G'$ permutationally.
	
	Similarly, when $k' \le k$, we can show that the word $v = v_1v_2 \cdots v_kv_{k+1}$ represents the graph $G \srec G'$ permutationally, where 
	\begin{align*}
		v_{i} & = p'_ip_i, \quad(1 \le i \le k'); \\
		v_{j} & = p'_{k'}p_{j}, \quad(k' + 1 \le j \le k); \\
		v_{k + 1} & = p_{{k}_{V \setminus N_{G}(m)}}p'_{{k'}_{N_{G'}(m')}}p_{{k}_{N_{G}(m)}}p'_{{k'}_{V' \setminus N_{G'}(m')}}.
	\end{align*}
	
	In any case, since $v$ is a concatenation of $1 + \max\{k, k'\}$ number of permutations, we have $\mathcal{R}^p(G \srec G') \le 1 + \max\{k, k'\}$. Also, since $G$ and $G'$ are isomorphic to certain induced subgraphs of $G \srec G'$ (cf. Remark \ref{com_reco}), we have  $\mathcal{R}^p(G \srec G') \ge \max\{k, k'\}$. Hence,
	$\mathcal{R}^p(G \srec G') = \max\{k, k'\}$ or $1+ \max\{k, k'\}$. \qed  
\end{proof}

 In view of Corollary \ref{bipartite_recom}, we have the following corollary of Theorem \ref{theorem_7}.

\begin{corollary}\label{coro_3}
	If $G$ and $G'$ are bipartite graphs with $\mathcal{R}^p(G) = k$ and $\mathcal{R}^p(G') = k'$,  then $\mathcal{R}^p(G \srec G') = \max\{k, k'\}$ or $1 + \max\{k, k'\}$.
\end{corollary}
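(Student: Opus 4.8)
The plan is to derive Corollary \ref{coro_3} as a direct consequence of Theorem \ref{theorem_7} together with Corollary \ref{bipartite_recom}, after dispensing with a small technical point about the orientation hypothesis. Theorem \ref{theorem_7} requires that $m$ be a source of $G$ and $m'$ be a sink of $G'$ with respect to some transitive orientations; so the first thing I would do is observe that when $G$ and $G'$ are bipartite, such orientations always exist regardless of the choice of $m$ and $m'$. Indeed, exactly as in the proof of Corollary \ref{bipartite_recom}, if $\{A,B\}$ is a bipartition of $G$ with $m \in A$, orienting every edge from $A$ to $B$ is a transitive orientation making $m$ a source; dually, if $\{A',B'\}$ is a bipartition of $G'$ with $m' \in A'$, orienting every edge from $B'$ to $A'$ is a transitive orientation making $m'$ a sink. (If $m$ happens to lie in $B$ rather than $A$, simply swap the roles of the two parts, and similarly for $m'$.)

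With that observation in hand, the argument is immediate: since $G$ and $G'$ are bipartite they are in particular comparability graphs, and the orientation hypothesis of Theorem \ref{theorem_7} is satisfied for the given $m$ and $m'$. Writing $\mathcal{R}^p(G) = k$ and $\mathcal{R}^p(G') = k'$, Theorem \ref{theorem_7} then yields directly that $\mathcal{R}^p(G \srec G') = \max\{k, k'\}$ or $1 + \max\{k, k'\}$, which is the assertion. One could alternatively route the argument through Proposition \ref{lemma_2}, which tells us that $G \srec G'$ is itself bipartite, and hence a comparability graph; but this is not needed for the \textit{prn} bound and Theorem \ref{theorem_7} already supplies everything.

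I do not expect any genuine obstacle here; the only thing to be careful about is making explicit that the source/sink condition is automatic for bipartite graphs for \emph{any} choice of marked vertices, since Theorem \ref{theorem_7} is stated under that condition rather than unconditionally. So the proof I would write is essentially two sentences: first recall the canonical transitive orientation of a bipartite graph in which a chosen vertex is a source (respectively sink), then invoke Theorem \ref{theorem_7}. If one wanted the corollary to be fully self-contained it would be natural to also cite Corollary \ref{bipartite_recom} for the fact that $G \srec G'$ is a comparability graph, as the corollary's preamble already does, but the quantitative conclusion rests solely on Theorem \ref{theorem_7}.
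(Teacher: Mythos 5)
Your proposal is correct and matches the paper's argument exactly: the paper also derives this corollary by noting (as in the proof of Corollary \ref{bipartite_recom}) that for any choice of $m$ and $m'$ a bipartite graph admits a transitive orientation in which the marked vertex is a source (resp.\ sink), and then invoking Theorem \ref{theorem_7}. Your explicit handling of the case $m \in B$ by swapping the parts is a sensible elaboration of the "without loss of generality" step in the paper.
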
	

\subsection{Characterization}

In summary, we obtained a transitive orientation for a recomposition of comparability graphs when the marked vertices are source and sink with respect to their transitive orientations. However, the marked vertices need not be source or sink to give a transitive orientation to the recomposition, as shown in the following example.
 
\begin{example}\label{EX_2}
	Consider the comparability graphs $G_1$ and $G_2$ given in Fig. \ref{fig_2} in which $a$ and $b$ are the marked vertices. Note that $b$ is not a source or sink under any transitive orientation given to $G_2$. However, $G_1 \psrec{a}{b} G_2$ is a comparability graph. 
	
	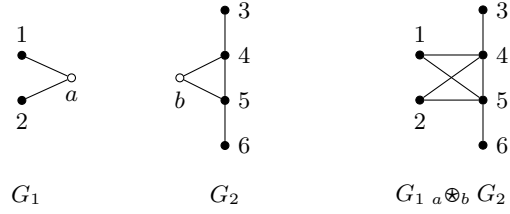
\begin{figure}[!h]
		\centering
		\begin{minipage}{.4\textwidth}
			\centering
			\[\begin{tikzpicture}[scale=0.6]
				\vertex (a_8) at (-2.5,0.5) [fill=black, label=above:$1$] {};
				\vertex (a_7) at (-2.5,-0.5) [fill=black, label=below:$2$] {};
				\vertex (a_6) at (-1.4,0) [label=below:$a$] {};
				\vertex (a_1) at (1,0) [label=below:$b$] {};
				\vertex (a_4) at (2,1.5) [fill=black, label=right:$3$] {};
				\vertex (a_2) at (2,0.5) [fill=black, label=right:$4$] {};
				\vertex (a_3) at (2,-0.5) [fill=black, label=right:$5$] {};
				\vertex (a_5) at (2,-1.5) [fill=black, label=right:$6$] {};
				\node (a_9) at (-2.4,-2) [label=below:$G_1$] {};
				\node (a_{10}) at (2,-2) [label=below:$G_2$] {};
				
				\path
				(a_1) edge (a_2)
				(a_1) edge (a_3)
				(a_2) edge (a_3)
				
				(a_2) edge (a_4)
				(a_3) edge (a_5)
				(a_6) edge (a_7)
				(a_6) edge (a_8) ;
			\end{tikzpicture}\] 
		\end{minipage}%
		\begin{minipage}{.3\textwidth}
			\centering
			\[\begin{tikzpicture}[scale=0.6]
				\vertex (a_1) at (-0.4,0.5) [fill=black,label=above:$1$] {};
				\vertex (a_2) at (1,0.5) [fill=black,label=right:$4$] {};
				\vertex (a_3) at (1,-0.5) [fill=black,label=right:$5$] {};
				\vertex (a_4) at (1,1.5) [fill=black,label=right:$3$] {};
				\vertex (a_5) at (1,-1.5) [fill=black,label=right:$6$] {};
				\vertex (a_6) at (-0.4,-0.5) [fill=black,label=below:$2$] {};
				\node (a_{10}) at (0.3,-2) [label=below:$G_1 \psrec{a}{b} G_2$] {};
				
				\path
				(a_1) edge (a_2)
				(a_1) edge (a_3)
				(a_2) edge (a_3)
				(a_2) edge (a_6)
				(a_3) edge (a_6)
				(a_2) edge (a_4)
				(a_3) edge (a_5);
			  \end{tikzpicture}\] 
		\end{minipage}%  
		\caption{Recomposition of comparability graphs} 
		\label{fig_2}
	\end{figure}
\end{example}

However, in Example \ref{EX_2},  the vertex $a$ is an all-adjacent vertex, i.e., $a$ is adjacent to all other vertices of $G_1$. In connection to this observation we prove the following theorem.

\begin{theorem}\label{th_all_adjacent}
	Let $G$ and $G'$ be two comparability graphs such that  at least one of $m$ and $m'$ is an all-adjacent vertex. Then $G \srec G'$ is a comparability graph. Moreover, if $\mathcal{R}^p(G) = k$ and $\mathcal{R}^p(G') = k'$, then $\mathcal{R}^p(G \srec G') = \max\{k, k'\}$. 
\end{theorem}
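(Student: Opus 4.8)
The plan is to recognize the recomposition, under this hypothesis, as a \emph{substitution} of posets and then invoke the behaviour of dimension under substitution. Without loss of generality assume $m$ is an all-adjacent vertex of $G$, so $N_G(m)=V$. Fix transitive orientations $T$ of $G$ and $T'$ of $G'$, inducing posets $P_G$ on $V\cup\{m\}$ and $P_{G'}$ on $V'\cup\{m'\}$, and write $P_{G[V]}$ for the subposet of $P_G$ carried by $V$. Since $m'$ is comparable in $P_{G'}$ to exactly the vertices of $N_{G'}(m')$, partition $N_{G'}(m')=U'\cup D'$ with $U'=\{b'\mid \overrightarrow{b'm'}\in T'\}$ and $D'=\{b'\mid \overrightarrow{m'b'}\in T'\}$. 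I would then define an orientation $\hat T$ of $G\srec G'$: orient the edges of $G[V]$ by $T$, the edges of $G'[V']$ by $T'$, and each cross edge $\overline{ab}$ ($a\in V$, $b\in N_{G'}(m')$) as $\overrightarrow{ba}$ if $b\in U'$ and as $\overrightarrow{ab}$ if $b\in D'$. Equivalently, $\hat T$ is the transitive orientation attached to the poset $R$ obtained from $P_{G'}$ by substituting the entire poset $P_{G[V]}$ in place of the point $m'$. The first step is to check that the comparability graph of $R$ is exactly $G\srec G'$: this is immediate for pairs inside $V$ and for pairs inside $V'$, while for a mixed pair $x\in V'$, $y\in V$ one uses that $x$ and $y$ are comparable in $R$ iff $x$ is comparable to $m'$ in $P_{G'}$, i.e.\ iff $x\in N_{G'}(m')$ — which is precisely when $\overline{xy}$ is a cross edge of $G\srec G'$, because $m$ is all-adjacent. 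This already yields that $G\srec G'$ is a comparability graph.

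I expect the mixed-pair verification — equivalently, a direct proof that $\hat T$ is transitive, if one prefers not to quote that substitution of posets again gives a poset — to be the main obstacle. The point that makes everything work is the observation that in $T'$ every vertex lying above a vertex of $D'$ again lies in $D'$, and dually for $U'$ (immediate from transitivity of $T'$ together with $\overrightarrow{m'd'}$, resp.\ $\overrightarrow{u'm'}$); this is exactly what prevents "dangling" comparabilities from $V$ to $V'\setminus N_{G'}(m')$ and forces transitivity of $\hat T$. It is here that the all-adjacency of $m$ is essential: it guarantees that \emph{every} vertex of $V$ plays the role $m'$ plays, so the block $V$ can be inserted where $m'$ was without breaking comparabilities.

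It remains to pin down $\mathcal{R}^p(G\srec G')$. For the lower bound, by Remark~\ref{com_reco} both $G$ and $G'$ are isomorphic to induced subgraphs of $G\srec G'$, so $\mathcal{R}^p(G\srec G')\ge\max\{\mathcal{R}^p(G),\mathcal{R}^p(G')\}=\max\{k,k'\}$. For the upper bound I would either cite the classical identity $\dim(R)=\max\{\dim P_{G'},\dim P_{G[V]}\}$ for a one-point substitution, or construct the realizer by hand: let $t=\max\{k,k'\}$, take realizers $\{L_1,\dots,L_t\}$ of $P_{G'}$ and $\{M_1,\dots,M_t\}$ of $P_{G[V]}$ (each padded to size $t$ by repeating a linear extension, which is possible since $\dim P_{G'}=k'\le t$ and $\dim P_{G[V]}=\mathcal{R}^p(G[V])\le\mathcal{R}^p(G)=k\le t$), and let $N_i$ be obtained from $L_i$ by deleting $m'$ and inserting the vertices of $V$, ordered by $M_i$, as a contiguous block in the slot of $m'$. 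A short check shows each $N_i$ is a linear extension of $R$, and $\{N_1,\dots,N_t\}$ realizes $R$: an incomparable pair inside $V'$ is separated by the $L_i$, one inside $V$ by the $M_i$, and a mixed incomparable pair $x\in V'$, $y\in V$ (so $x\parallel m'$ in $P_{G'}$) is separated because some $L_i$ puts $x$ before $m'$ and another after, hence some $N_i$ puts $x$ before the whole $V$-block and another after it. Thus $\mathcal{R}^p(G\srec G')\le t$, and combined with the lower bound, $\mathcal{R}^p(G\srec G')=\max\{k,k'\}$. The case in which $m'$, rather than $m$, is the all-adjacent vertex is symmetric.
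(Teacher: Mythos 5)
Your proof is correct and is in essence the same construction as the paper's: the paper replaces the occurrence of $m'$ in each of $t=\max\{k,k'\}$ permutations $p'_i=r'_im's'_i$ representing $G'$ by a permutation $r_is_i$ of $V$ (forming $u_i=r'_ir_is_is'_i$), which is exactly your operation of inserting the $V$-block, ordered by $M_i$, into the slot vacated by $m'$ in $L_i$. The only difference is packaging: you frame it as a one-point substitution of posets and verify a realizer, whereas the paper stays entirely at the level of words and checks alternation directly (and so never needs to fix transitive orientations); both yield the upper bound $t$ and the same lower bound via induced subgraphs.
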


\begin{proof}
	We prove the result by constructing requisite number of permutations on the vertices of $G \srec G'$ whose concatenation represents $G \srec G'$.
	
	Let the words $p_1p_2 \cdots p_k$ and $p'_1p'_2 \cdots p'_{k'}$ represent $G$ and $G'$, respectively, where each $p_i$ $(1 \leq i \leq k)$ is a permutation on the vertices of $G$ and each $p'_i$ $(1 \leq i \leq k')$ is a permutation on the vertices of $G'$. 
	
	Suppose $\max\{k, k'\} = t$. If $k' < k$, then set $p'_j = p'_{k'}$ for all $k'+ 1 \le j \le t$ and note that $p'_1p'_2 \cdots p'_t$ represents $G'$. Similarly, if $k < k'$, then set $p_j = p_{k}$ for all $k + 1 \le j \le t$ and note that $p_1p_2 \cdots p_t$ represents $G$. In any case, the words $w = p_1p_2 \cdots p_t$ and $w' = p'_1p'_2 \cdots p'_t$ represent the graphs $G$ and $G'$, respectively. For $1 \le i \le t$, let $p_i = r_ims_i$  and $p'_i = r'_im's'_i$ so that $p_{i_V} = r_is_i$ and $p'_{i_{V'}} = r'_is'_i$. 
	
	Suppose $m$ is an all-adjacent vertex of $G$. Let $u_i = r'_ir_is_is'_i$, for all $1 \le i \le t$. Note that each $u_i$ is a permutation on the vertices of $G \srec G'$.  We show that the word $u = u_1u_2 \cdots u_t$  represents the graph $G \srec G'$ permutationally.
	
	Note that the induced subgraphs $G[V]$ and $G'[V']$ are induced subgraphs of $G \srec G'$. Further, since $u_{V} = p_{1_V}p_{2_V} \cdots p_{t_V} = w_{V}$ and $u_{V'} = p'_{1_{V'}}p'_{2_{V'}} \cdots p'_{{t}_{V'}} = w'_{V'}$, the subwords $u_{V}$ and $u_{V'}$ of $u$ represent the graphs $G[V]$ and $G'[V']$, respectively. Thus, any two vertices of $G[V]$ or of $G'[V']$ are adjacent if and only if they alternate in the word $u$.
	
	For $a \in V$ and $a' \in V'$, note that $a$ and $a'$ are adjacent in $G \srec G'$ if and only if $a \in N_G(m)$ and $a' \in N_{G'}(m')$. However, note that $V = N_{G}(m)$ as $m$ is an all-adjacent vertex of $G$.  Accordingly, we consider the two cases based on $a' \in N_{G'}(m')$ or not to show that $a$ and $a'$ are adjacent in $G \srec G'$ if and only if they alternate in $u$ so that $u$ represents the graph $G \srec G'$. 
	
	\begin{itemize}
		\item Case-1: $a' \in N_{G'}(m')$. Since $w'$ represents $G'$, we have $a'm' \preceq p'_i$ for all $i$, or $m'a' \preceq p'_i$ for all $i$, $1 \le i \le t$. Observe that each $u_i$ ($1 \le i \le t$) is constructed from $p'_i$ by replacing $m'$ with the permutation $p_{i_V}$. Hence, if $a'm' \preceq p'_i$ for all $i$, then $a'a \preceq u_i$ for all $i$. If $m'a' \preceq p'_i$ for all $i$, then $aa' \preceq u_i$ for all $i$. In any case, $a$ and $a'$ alternate in $u$.
		
		\item Case-2: $a' \notin N_{G'}(m')$. Then $a'$ and $m'$ do not alternate in $w'$. Accordingly, there exist $i$ and $j$ $(1 \le i, j \le t)$ such that $a'm' \preceq p'_i$ and $m'a' \preceq p'_j$. Again, as per the construction of $u_i$ and $u_j$, we have $a'a \preceq u_i$ and $aa' \preceq u_j$. Hence, $a$ and $a'$ do not alternate in $u$.
	\end{itemize}
	
	Similarly, if $m'$ is an all-adjacent vertex of $G'$, we can show that the word $u = u_1u_2 \cdots u_t$ represents the graph $G \srec G'$ permutationally, where $u_i = r_ir'_is'_is_i$, for $1 \le i \le t$.
 
	Therefore, we have $\mathcal{R}^p(G \srec G') \le t = \max\{k, k'\}$. But, by Remark \ref{com_reco}, since $G$ and $G'$ are isomorphic to certain induced subgraphs of $G \srec G'$, we have $\mathcal{R}^p(G \srec G') \geq t$. Hence, $\mathcal{R}^p(G \srec G') = \max\{k, k'\}$. \qed	 
  
\end{proof}

We now recall a theorem that characterizes a source of a comparability graph using a graph that is constructed as follows. Let $v$ be a vertex of a graph $C$ and $C^v$ be the graph obtained from $C$ by adding a new vertex $v^\prime$ adjacent only to the vertex $v$ of $C$. Then we have the following theorem.

\begin{theorem}[\cite{gimbel1992sources,olariu1992sources}] \label{source}
	Let $C$ be a comparability graph and $v$ be a vertex of $C$. Then $v$ is a source with respect to some transitive orientation of $C$ if and only if $C^v$ is a comparability graph.
\end{theorem}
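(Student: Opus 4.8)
The plan is to prove both directions of the equivalence by manipulating transitive orientations and using transitivity to propagate edge directions to the new pendant vertex $v'$. For the forward direction, suppose $v$ is a source with respect to some transitive orientation $T$ of $C$. I would extend $T$ to an orientation $T'$ of $C^v$ by orienting the unique new edge as $\overrightarrow{v'v}$. To check that $T'$ is transitive, the only new triples to examine are those involving $v'$. Since $v'$ has out-degree one (to $v$) and in-degree zero, a violation would require $\overrightarrow{v'v}$ and $\overrightarrow{vc}$ for some $c$ with $\overrightarrow{v'c}$ absent; but $v$ is a source in $T$, so there is no edge $\overrightarrow{vc}$ (every edge at $v$ points into $v$), hence no such triple exists, and $T'$ is transitive. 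Thus $C^v$ is a comparability graph.

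For the converse, suppose $C^v$ is a comparability graph with a transitive orientation $S$. The edge $\overline{v'v}$ is oriented either $\overrightarrow{v'v}$ or $\overrightarrow{vv'}$; by reversing all edges of $S$ if necessary (which keeps transitivity and only swaps sources with sinks), assume $\overrightarrow{v'v}$. Let $T$ be the restriction of $S$ to $C$; it is clearly a transitive orientation of $C$. I claim $v$ is a source of $T$. If not, there is $c \in V(C)$ with $\overrightarrow{cv}$ in $T$, hence in $S$. But then $\overrightarrow{v'v}$ would be forced by transitivity to coexist with... wait, the relevant chain is $\overrightarrow{cv}$ together with the edge at $v'$; since $v'$ is adjacent only to $v$, consider instead: in $S$ we have no edge between $c$ and $v'$, yet transitivity applied to $\overrightarrow{v'v}$ and $\overrightarrow{vc}$ would force $\overrightarrow{v'c}$ — so we must rule out $\overrightarrow{vc}$, i.e., show $v$ has out-degree zero. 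Indeed, if $\overrightarrow{vc}$ were present for some $c$, then $\overrightarrow{v'v}$ and $\overrightarrow{vc}$ give $\overrightarrow{v'c}$ by transitivity of $S$, contradicting that $\overline{v'c}$ is not an edge of $C^v$. Hence $v$ is a sink of $S$ restricted to its neighborhood... more precisely $v$ has out-degree zero in $S$ and therefore in $T$, so $v$ is a sink of $T$. Then the reversed orientation $T^{\mathrm{rev}}$ of $C$ is transitive with $v$ a source, as required. (Equivalently, had we assumed $\overrightarrow{vv'}$ at the outset, the same argument with directions reversed shows $v$ is a source directly.)

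The bookkeeping obstacle is keeping the source/sink bookkeeping straight through the reversal step: the pendant edge can be oriented in two ways, and in one of them $v$ ends up a sink rather than a source, so one must invoke the symmetry ``reverse every edge'' (which preserves transitive orientations, interchanging sources and sinks) at the right moment. Once that is handled, both implications reduce to the single observation that a pendant vertex $v'$ adjacent only to $v$ forces, via transitivity, every edge at $v$ to be oriented consistently (all in, or all out), which is exactly the statement that $v$ is a source or a sink. I would present the forward direction cleanly as above, and for the converse explicitly do the case analysis on the orientation of $\overline{v'v}$ so the reader sees where the source versus sink dichotomy comes from; this mirrors the structure already used in the proof of Theorem~\ref{lemma_1}.
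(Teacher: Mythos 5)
First, a point of reference: the paper does not prove Theorem~\ref{source} at all --- it is quoted from Gimbel and Olariu --- so there is no in-paper argument to compare against. Your overall strategy is the standard and correct one: transitivity applied through the pendant vertex $v'$ forces every edge at $v$ to be oriented consistently (all out of $v$, or all into $v$), which is exactly the source/sink dichotomy, and the global edge-reversal symmetry exchanges the two. Your converse direction is correct as written.

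Your forward direction, however, contains a step that fails under the paper's own definitions. The paper defines a source as a vertex of \emph{in-degree zero}, so if $v$ is a source of $T$ then every edge at $v$ is oriented $\overrightarrow{vc}$, out of $v$; your parenthetical ``every edge at $v$ points into $v$'' describes a sink. With the correct reading, the orientation you construct is not transitive: you orient the pendant edge as $\overrightarrow{v'v}$, and then $\overrightarrow{v'v}$ together with $\overrightarrow{vc}$ forces $\overrightarrow{v'c}$, which is not an edge of $C^v$. The repair is immediate --- orient the pendant edge as $\overrightarrow{vv'}$ instead (then $v'$ is a sink, and no transitivity constraint through $v'$ arises because $v$ has no in-edges), or first reverse $T$ so that $v$ becomes a sink and then use $\overrightarrow{v'v}$ --- but as literally written the extension $T'$ is not a transitive orientation, so the source/sink bookkeeping you flag as the ``obstacle'' is precisely where the argument currently breaks.
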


Now, using Theorem \ref{source}, we find a necessary and sufficient condition to make $G \srec G'$ a comparability graph, when one of $m$ and $m'$ is not a source (or sink) with respect to any transitive orientation of $G$ or $G'$.

\begin{theorem}
Let $G$ and $G'$ be two comparability graphs such that $m$ is not a source with respect to any transitive orientation of $G$. If $G \srec G'$ is a comparability graph, then $m'$ is an all-adjacent vertex of $G'$.	
\end{theorem}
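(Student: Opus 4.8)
The plan is to argue by contradiction: assume $G \srec G'$ is a comparability graph but $m'$ is not an all-adjacent vertex of $G'$, and derive that $m$ must then be a source (or sink) of $G$ with respect to some transitive orientation, contradicting the hypothesis. The key tool will be Theorem \ref{source}: recall that $G[V]$ and $G'[V']$ are induced subgraphs of $G \srec G'$ (Remark \ref{com_reco}), and that $m$ being a source of some transitive orientation of $G$ is equivalent to $G^m$ (where $G[V]$ gets a new pendant vertex at $m$) being a comparability graph. So the real target is to locate, inside the comparability graph $G \srec G'$, an induced subgraph isomorphic to $G^m$.

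First I would set up the witness vertex. Since $m'$ is not all-adjacent in $G'$, there is a vertex $c' \in V'$ with $c' \notin N_{G'}(m')$; since $G$ and $G'$ are connected, $N_{G'}(m') \neq \emptyset$, so pick $b' \in N_{G'}(m')$ and a vertex $a \in N_G(m)$ (these exist by connectedness). Now examine the induced subgraph $(G \srec G')[V \cup \{b', c'\}]$. Inside it, $G[V]$ appears as $(G \srec G')[V]$; the vertex $b'$ is adjacent in $G \srec G'$ precisely to the vertices of $N_G(m) \subseteq V$ plus whatever neighbours $b'$ has in $V'$ — but restricted to $V \cup \{b', c'\}$, $b'$ is adjacent exactly to $N_G(m)$ and possibly to $c'$; and $c'$ is adjacent (within $V \cup \{c'\}$) to nothing in $V$ because $c' \notin N_{G'}(m')$ means no edge from $c'$ to $V$, and $c'$ is adjacent to $b'$ iff $\overline{b'c'} \in E'$. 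The point of including both $b'$ and $c'$ is to realise simultaneously the behaviour of $m$ (played by $b'$, which sees exactly $N_G(m)$) and a pendant-type vertex attached at $m$: I would want a vertex adjacent only to $m$'s copy. The natural candidate for the pendant is $c'$ if $\overline{b'c'} \in E'$, since then $c'$ is adjacent to $b'$ and to nothing else in $V \cup \{b',c'\}$; if $\overline{b'c'} \notin E'$, I would instead pass to $G'$ itself, noting $(G \srec G')[V' \cup \{a\}] \cong G'$ and using the non-all-adjacency of $m'$ there together with Theorem \ref{source} applied on the $G'$ side — symmetry of the roles of $m$ and $m'$ in the definition of $\srec$ lets me trade the problem to the component where the pendant is available.

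Concretely: I claim $(G \srec G')[V \cup \{b', c'\}]$, or a suitable induced subgraph of it, is isomorphic to $G^m$ — the copy of $G[V]$ together with a vertex ($b'$) adjacent exactly to $N_G(m)$ — with one extra vertex $c'$ attached only to $b'$; but that is exactly the graph $(G^m)^{m^*}$ where we further add a pendant, and more simply, $(G \srec G')[V \cup \{b'\}] \cong G$ already (Remark \ref{com_reco}) and then $c'$ plays the role of the new pendant on the vertex $b'$ that corresponds to $m$. Hence $(G \srec G')[V \cup \{b', c'\}] \cong G^m$ when $\overline{b'c'} \in E'$. Since $G \srec G'$ is a comparability graph and comparability graphs are hereditary, $G^m$ is a comparability graph, so by Theorem \ref{source} $m$ is a source with respect to some transitive orientation of $G$ — contradiction. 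The remaining case $\overline{b'c'} \notin E'$ is handled by first choosing $b'$ among $N_{G'}(m')$ more carefully (if \emph{every} non-neighbour $c'$ of $m'$ is non-adjacent to \emph{every} neighbour $b'$ of $m'$, then $N_{G'}(m')$ together with its complement in $V'$ would split $G'$ in a way that still yields a non-neighbour adjacent to a neighbour unless $G'$ is disconnected — so connectivity forces the existence of an edge from $N_{G'}(m')$ to $V' \setminus N_{G'}(m')$, giving the desired $b', c'$).

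The main obstacle I anticipate is precisely this last point: guaranteeing that one can pick $b' \in N_{G'}(m')$ and $c' \in V' \setminus (N_{G'}(m') \cup \{m'\})$ with $\overline{b'c'} \in E'$ so that the pendant construction goes through cleanly; if no such edge exists, $V' \setminus \{m'\}$ decomposes into $N_{G'}(m')$ and its complement with no edges between them, and since $m'$ attaches only to $N_{G'}(m')$, the whole graph $G'$ would be disconnected — contradicting our standing assumption that all graphs are connected. Once that combinatorial point is pinned down, everything else is a routine application of heredity of comparability graphs plus Theorem \ref{source}, together with the isomorphism bookkeeping from Remark \ref{com_reco}.
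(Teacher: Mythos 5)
Your proposal is correct and follows essentially the same route as the paper: pick a non-neighbour $c'$ of $m'$ adjacent to some neighbour $b'$ of $m'$ (which exists by connectedness of $G'$), observe that $(G \srec G')[V \cup \{b',c'\}]$ is an induced copy of $G^m$ with $b'$ playing the role of $m$ and $c'$ the pendant, and invoke heredity of comparability graphs together with Theorem \ref{source} to contradict the hypothesis that $m$ is never a source. The brief detour about the case $\overline{b'c'} \notin E'$ is resolved exactly as in the paper, by the connectivity argument, so no gap remains.
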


\begin{proof}
	Let $G \srec G'$ be a comparability graph. Assume $m'$ is not an all-adjacent vertex of $G'$. Then there exists a vertex $c' \in V'$ such that $c' \notin N_{G'}(m')$ but $\overline{b'c'} \in E'$ for some vertex $b' \in N_{G'}(m')$. The connectedness of the graph $G'$ ensures the existence of such vertex $c'$. 
	
	Consider the induced subgraph $(G \srec G')[V \cup \{b'\}]$, say ${^m}G_{b'}$. Note that $b'$ is adjacent to all vertices of $N_{G}(m)$ in $G \srec G'$. Hence, ${^m}G_{b'}$ is isomorphic to $G$ as it can be obtained from $G$ by replacing $m$ with $b'$. While ${^m}G_{b'}$ is a comparability graph, $b'$ cannot be a source with respect to any transitive orientation of ${^m}G_{b'}$, as $G$ does. 
	
	Now consider the induced subgraph $(G \srec G')[V \cup \{b', c'\}]$ of $G \srec G'$. Note that $(G \srec G')[V \cup \{b', c'\}] = ({^m}G_{b'})^{b'}$, the graph obtained from ${^m}G_{b'}$ by making $c'$ adjacent only to $b'$. Since $b'$ cannot be a source, $({^m}G_{b'})^{b'}$ is not a comparability graph, by Theorem \refeq{source}. However, $({^m}G_{b'})^{b'}$ is a comparability graph as it is an induced subgraph of the comparability graph $G \srec G'$; a contradiction.  Hence, $m'$ should be an all-adjacent vertex of $G'$. 	\qed
\end{proof}

We summarize the results of this section in the following theorem.

\begin{theorem}\label{main_charc}
	Let $G = (V \cup \{m\}, E)$ and $G' = (V' \cup \{m'\}, E')$ be two comparability graphs with disjoint vertex sets. Suppose $\mathcal{R}^p(G) = k$ and $\mathcal{R}^p(G') = k'$. Then we have the following:
	
	\begin{enumerate}
		\item The split recomposition $G \srec G'$  is a comparability graph if and only if one of the following holds:
		\begin{itemize}
			\item The vertices $m$ and $m'$ are source and sink, or vice versa, with respect to some transitive orientations on $G$ and $G'$.
			\item At least one of the vertices $m$ and $m'$ is an all-adjacent vertex.  
		\end{itemize}
		
		\item Moreover, if the split recomposition $G \srec G'$ is a comparability graph, then $\mathcal{R}^p(G \srec G') = \max\{k, k'\}$ or $1 + \max\{k, k'\}$.
	\end{enumerate} 
\end{theorem}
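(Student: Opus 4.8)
The plan is to assemble Theorem~\ref{main_charc} as a compilation of the results already proved in this section, so the proof is essentially an exercise in citing the correct earlier statements and checking that together they cover all cases. I would organize the argument around part (1) first, the biconditional, and then obtain part (2) as an immediate consequence.

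For the forward direction of part (1), assume $G \srec G'$ is a comparability graph and suppose neither $m$ nor $m'$ is an all-adjacent vertex. Then in particular $m$ is not an all-adjacent vertex of $G$; I would like to conclude that $m$ is a source (or sink) with respect to some transitive orientation of $G$, and symmetrically for $m'$. The key tool is the contrapositive of the last theorem of the subsection (the one preceding \ref{main_charc}): if $m$ is \emph{not} a source with respect to any transitive orientation of $G$ and $G \srec G'$ is a comparability graph, then $m'$ is an all-adjacent vertex of $G'$. Since we assumed $m'$ is not all-adjacent, $m$ must be a source (or, by the symmetric orientation-reversal remark, a sink) with respect to some transitive orientation of $G$; and symmetrically $m'$ must be a source or sink with respect to some transitive orientation of $G'$. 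One then has to check that the four sign combinations collapse, via Theorem~\ref{lemma_1} and the remark after Theorem~\ref{rec_comparability}, to the stated ``source and sink, or vice versa'' condition — here I would invoke Theorem~\ref{lemma_1} applied to $H$, noting that $G \srec G'$ being a comparability graph forces $H$ to be one (this needs a small remark, or can be extracted from the discussion around \cite{cornelsen2009treelike} and Theorem~\ref{rec_comparability}). The backward direction of part (1) is simply: if $m,m'$ are source and sink (or vice versa) then Theorem~\ref{rec_comparability} gives that $G \srec G'$ is a comparability graph; and if one of $m,m'$ is all-adjacent then Theorem~\ref{th_all_adjacent} gives the same conclusion.

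For part (2), suppose $G \srec G'$ is a comparability graph. By part (1) we are in one of the two listed cases. In the all-adjacent case, Theorem~\ref{th_all_adjacent} gives the sharper conclusion $\mathcal{R}^p(G \srec G') = \max\{k,k'\}$, which in particular lies in $\{\max\{k,k'\},\,1+\max\{k,k'\}\}$. In the source/sink case, Theorem~\ref{theorem_7} gives exactly $\mathcal{R}^p(G \srec G') = \max\{k,k'\}$ or $1 + \max\{k,k'\}$. Either way the stated bound holds, completing the proof.

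The main obstacle I anticipate is the bookkeeping in the forward direction of part (1): making precise the claim that $G \srec G'$ being a comparability graph implies $H$ is a comparability graph (so that Theorem~\ref{lemma_1} becomes available), and then confirming that the orientation-sign cases genuinely reduce to the clean dichotomy in the statement rather than leaving, say, a ``both sources'' case unaccounted for. Concretely, I would argue that if $G \srec G'$ has a transitive orientation, restricting it to the copies of $G$ and $G'$ sitting inside $G \srec G'$ (Remark~\ref{com_reco}) yields transitive orientations $T,T'$ of $G,G'$; if neither marked vertex is all-adjacent, the last theorem of the subsection forbids the configuration where $m$ is neither source nor sink in $T$, and likewise for $m'$, and a short transitivity check (mirroring the ``$m$ is not a source'' case inside the proof of Theorem~\ref{lemma_1}) rules out $m$ and $m'$ being simultaneously both sources or both sinks across the induced edge between $N_G(m)$ and $N_{G'}(m')$. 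Everything else is a citation.
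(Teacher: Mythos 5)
Your overall strategy --- assembling the theorem from Theorems \ref{rec_comparability}, \ref{th_all_adjacent}, \ref{theorem_7} and the final (unnumbered-label) theorem of Subsection 4.3 --- is exactly what the paper intends; Theorem \ref{main_charc} is presented explicitly as a summary of the section, and the citations you give for the backward direction of part (1) and for part (2) are the right ones and are applied correctly.

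Two corrections on the forward direction of part (1). First, the auxiliary claim you propose to insert, that ``$G \srec G'$ being a comparability graph forces $H$ to be one,'' is false in general, and the paper's own Example \ref{EX_2} refutes it: $G_1 \psrec{a}{b} G_2$ is a comparability graph, yet the corresponding graph $H$ (join $G_1$ and $G_2$ by the edge $\overline{ab}$) is not, because by Theorem \ref{lemma_1} that would require $b$ to be a source or a sink under some transitive orientation of $G_2$, which it is not. The implication does hold when neither marked vertex is all-adjacent, but establishing that is precisely the conclusion you are trying to reach, so the detour through $H$ is circular as well as unnecessary. Second, the ``four sign combinations'' you worry about do not actually arise: the transitive orientations of $G$ and $G'$ in the first bullet are chosen independently, and reversing a transitive orientation preserves transitivity while exchanging sources and sinks (as noted in the paper's preliminaries). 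Hence the contrapositive of the last theorem of the subsection, applied once as stated and once with the roles of $(G,m)$ and $(G',m')$ exchanged, yields that $m$ is a source with respect to some transitive orientation $T$ of $G$ and $m'$ is a source with respect to some transitive orientation $T'$ of $G'$; replacing $T'$ by its reversal makes $m'$ a sink, which is exactly the first bullet. With the $H$-detour deleted and replaced by this observation, your proof is complete and coincides with the paper's.
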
 

As a consequence, we also provide the \textit{prn} of the graph construction $G^m$ in the following corollary.

\begin{corollary}
	Let $G$ be a comparability graph such that $\mathcal{R}^p(G) = k$. If $G^m$ is a comparability graph, then $\mathcal{R}^p(G^m) = k$ or $1 + k$.
\end{corollary}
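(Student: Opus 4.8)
The plan is to recognise $G^m$ as a comparability graph carrying a pendant vertex and to apply Proposition~\ref{prop_perm_ext} ``in reverse'': with $G^m$ playing the role of the graph $G$ of that proposition and the newly added vertex of $G^m$ playing the role of its marked vertex $m$. Throughout, write $G^m = (V(G) \cup \{v'\},\, E \cup \{\overline{mv'}\})$, where $v'$ is the new vertex adjacent only to $m$, so that $(G^m)[V(G)] = G$.

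First I would record the lower bound: since $G$ is an induced subgraph of the comparability graph $G^m$, the hereditary behaviour of the \textit{prn} noted in Section~2 gives $\mathcal{R}^p(G^m) \ge \mathcal{R}^p(G) = k$. For the upper bound, the decisive observation is that $v'$ has degree $1$ in $G^m$; hence, with respect to any transitive orientation of $G^m$ (one exists by hypothesis), the vertex $v'$ is either a source or a sink. Consequently the pair formed by $G^m$ and $v'$ satisfies the hypotheses of Proposition~\ref{prop_perm_ext}, and there $(G^m)[V(G)] = G$. Applying that proposition --- feeding its construction a realizer of $P_G$ of minimum cardinality --- one obtains permutations $p_1, \ldots, p_k$ on $V(G)$ whose concatenation represents $G$, together with an extension to permutations $q_1, \ldots, q_{k+1}$ on $V(G) \cup \{v'\} = V(G^m)$ whose concatenation represents $G^m$. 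This gives $\mathcal{R}^p(G^m) \le k+1$, and combining with the lower bound yields $\mathcal{R}^p(G^m) = k$ or $1+k$.

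The one point requiring care is the count of permutations: Proposition~\ref{prop_perm_ext} adjoins exactly one permutation to whatever realizer of $P_{(G^m)[V(G)]} = P_G$ is supplied, so to land at $k+1$ one must supply a realizer of minimum size, which by the correspondence $\mathcal{R}^p(G) = \dim(P_G)$ has cardinality exactly $k$ (and this is independent of the transitive orientation of $G^m$ chosen, whose restriction to $V(G)$ is again a transitive orientation of $G$). Beyond this bookkeeping there is no genuine obstacle; the substantive fact that a pendant vertex is necessarily a source or a sink is exactly what makes Proposition~\ref{prop_perm_ext} applicable, and it mirrors Theorem~\ref{source} with $C = G$ and $v = m$. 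As an alternative one could instead invoke Theorem~\ref{lemma_4} with $G'$ taken to be the one-vertex graph, since then $H = G^m$ and $\mathcal{R}^p(G') = 1 \le k$; I would nevertheless favour the argument through Proposition~\ref{prop_perm_ext}, which avoids this degenerate choice of $G'$.
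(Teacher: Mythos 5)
Your proof is correct, and it takes a genuinely different route from the paper's. You treat $G^m$ directly as a comparability graph with a pendant vertex $v'$, observe that a degree-one vertex is forced to be a source or a sink under any transitive orientation, and then run the construction of Proposition~\ref{prop_perm_ext} with $G^m$ in the role of its $G$ and $v'$ in the role of its $m$, seeded with a minimum realizer of $P_{(G^m)[V(G)]}=P_G$; this yields $k+1$ permutations representing $G^m$, and the hereditary lower bound $\mathcal{R}^p(G^m)\ge k$ finishes the argument. The paper instead invokes Theorem~\ref{source} to deduce that $m$ is a source of $G$, identifies $G^m$ with the split recomposition $G \srec P_3$ of $G$ with a path on three vertices (whose \textit{prn} is $2$), and then treats the cases $k=1$ (via Theorem~\ref{th_all_adjacent}) and $k\ge 2$ (via Theorem~\ref{main_charc}) separately. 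Your route is shorter: it needs neither Theorem~\ref{source} nor the auxiliary path nor the case split. Its only delicate point is the one you flag yourself --- Proposition~\ref{prop_perm_ext} as stated only asserts the existence of \emph{some} $k$, so one must re-enter its construction (exactly as the paper does when rephrasing it as Lemma~\ref{const_perm_GG'}) to see that it adds a single permutation to whatever realizer is supplied, and use that $\dim(P_G)=k$ does not depend on which transitive orientation of $G^m$ is restricted to $G$. What the paper's detour through $G \srec P_3$ buys is the sharper conclusion $\mathcal{R}^p(G^m)=2$ when $k=1$, which your bound leaves as $1$ or $2$; this extra precision is not needed for the statement as given.
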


\begin{proof}
	 If $G^m$ is a comparability graph, then by Theorem \ref{source}, $m$ is a source of $G$. Consider a path $P_3$: $\begin{tikzpicture}[scale=0.6]
	 	\vertex (a_1) at (0,0) [fill=black, label=above:$m'$] {};
	 	\vertex (a_2) at (1,0) [fill=black, label=above:$b$] {};
	 	\vertex (a_3) at (2,0) [fill=black, label=above:$a$] {};
	 	
	 	\path
	 	(a_1) edge (a_2)
	 	(a_2) edge (a_3);
	 \end{tikzpicture}$ and note that $\mathcal{R}^p(P_3) = 2$, as the word $bam'bm'a$ represents $P_3$, which is not a complete graph.  Further, it can be observed that $G^m$ is isomorphic to $G \srec P_3$. 
	
	If $k = 1$, then $G$ is a complete graph so that $m$ is an all-adjacent vertex of $G$. Accordingly, by Theorem \ref{th_all_adjacent}, $\mathcal{R}^p(G^m) = \mathcal{R}^p(G \srec P_3) = \max\{1, 2\} = 2$. Otherwise, for $k \ge 2$, $\max\{k, 2\} = k$. Then, by Theorem \ref{main_charc}, $\mathcal{R}^p(G^m) = k$ or $1 + k$. Hence, in any case, we have $\mathcal{R}^p(G^m) = k$ or $1 + k$. \qed
\end{proof}

\section{\textit{prn}-Irreducible Graphs}

Following the notion of irreducible posets in order theory (cf. \cite{Trotterbook}), in this section, we define the concept of  \textit{prn}-irreducible comparability graphs and study them with respect to split recomposition. In this connection, we characterize \textit{prn}-irreducible graphs such that their recomposition is a comparability graph. Accordingly, we determine the \textit{prn} of the resultant comparability graph based on the \textit{prn}s of its split components. We also observe that the split recomposition of \textit{prn}-irreducible graphs is not \textit{prn}-irreducible.  

\begin{definition}
A comparability graph $C = (V, E)$ is said to be \textit{$k$-prn-irreducible} for some $k \geq 2$, if $\mathcal{R}^p(C) = k$ and $\mathcal{R}^p(C[V \setminus \{a\}]) = k - 1$ for every $a \in V$. Further, a comparability graph is said to be \textit{prn-irreducible} if it is $k$-prn-irreducible for some $k \geq 2$. 
\end{definition}

\begin{remark}
A comparability graph is $2$-\textit{prn}-irreducible if and only if it is an edgeless graph on two vertices. For instance, any connected graph $C$ on at least 3 vertices with $\mathcal{R}^p(C) = 2$ has a vertex $a$ that is not adjacent to all other vertices. If we remove any vertex adjacent to $a$ from $C$, the \textit{prn} of the resultant graph is not one, as it is not a complete graph.    
\end{remark}

\begin{example}
	For $n \ge 3$, consider the crown graph $H_{n,n}$ (a graph which is obtained by deleting a perfect matching from the complete bipartite graph $K_{n, n}$). From \cite{MR2914710}, recall that $\mathcal{R}^p(H_{n,n}) = n$. Let $H'$ be the bipartite graph obtained by deleting a vertex from $H_{n,n}$. Note that $\mathcal{R}^p(H') \le n-1$ as $n-1$ is the size of one of the partite sets of $H'$, and $n-1 \le \mathcal{R}^p(H')$ as $H_{n-1, n-1}$ is an induced subgraph of $H'$ (cf. \cite{khyodeno2}). Hence, $\mathcal{R}^p(H') = n-1$ so that $H_{n, n}$ is $n$-\textit{prn}-irreducible.
\end{example}

\begin{example}
   For $n \ge 3$, the cycle $C_{2n}$ is $3$-\textit{prn}-irreducible. Note that by deleting a vertex from $C_{2n}$ we get the path $P_{2n-1}$ with $2n-1$ vertices. For $n \ge 3$, $\mathcal{R}^p(C_{2n}) = 3$ and $\mathcal{R}^p(P_n) = 2$ (cf. \cite{khyodeno1}). 
\end{example}

Trotter and Moore, as well as Kelly independently and simultaneously obtained the complete list of 3-irreducible posets \cite{kelly77,trotter76}. A complete list of $3$-\textit{prn}-irreducible comparability graphs can be derived from the corresponding posets.  

\begin{proposition}\label{no_all_adja_vertex}
	For $k \ge 2$, any $k$-\textit{prn}-irreducible comparability graph does not have an all-adjacent vertex.
\end{proposition}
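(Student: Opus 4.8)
The plan is to prove the contrapositive: if a comparability graph $C = (V, E)$ has an all-adjacent vertex, then $C$ is not $k$-prn-irreducible for any $k \ge 2$. So suppose $a \in V$ is adjacent to every other vertex of $C$, and suppose $\mathcal{R}^p(C) = k$. It suffices to exhibit a single vertex $b \in V$ such that $\mathcal{R}^p(C[V \setminus \{b\}]) \neq k - 1$, which contradicts the definition of $k$-prn-irreducibility.

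The natural candidate is $b = a$ itself. First I would establish that $\mathcal{R}^p(C[V \setminus \{a\}]) = \mathcal{R}^p(C)$, i.e., deleting an all-adjacent vertex does not change the prn. The inequality $\mathcal{R}^p(C[V \setminus \{a\}]) \le \mathcal{R}^p(C)$ is immediate since $C[V \setminus \{a\}]$ is an induced subgraph of the comparability graph $C$ (as noted in the Preliminaries). For the reverse inequality, I would take a permutational representation $p_1 p_2 \cdots p_\ell$ of $C[V \setminus \{a\}]$ with $\ell = \mathcal{R}^p(C[V \setminus \{a\}])$ and show it can be extended to a representation of $C$ using the same number of permutations. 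The key observation is that since $a$ is adjacent to all of $V \setminus \{a\}$, the vertex $a$ must alternate with every other vertex in any representing word; placing $a$ at the very front (or the very end) of every permutation $p_i$ achieves exactly this. Concretely, the word $(a p_1)(a p_2) \cdots (a p_\ell)$ has the property that for any two vertices $x, y \in V \setminus \{a\}$ their alternation is unchanged, and $a$ alternates with each such $x$ because $a$ occurs exactly once at the start of each block. Hence this word permutationally represents $C$, giving $\mathcal{R}^p(C) \le \ell$, so $\mathcal{R}^p(C[V \setminus \{a\}]) = \mathcal{R}^p(C) = k \neq k - 1$ (since $k \ge 2 > 0$). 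This contradicts $k$-prn-irreducibility, completing the proof.

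I would phrase the extension step cleanly using the poset/realizer language already set up in Remark~\ref{per_lin_ext} and Proposition~\ref{prop_perm_ext}, or directly via words — either works, but the direct word argument is shortest. The main (minor) obstacle is making sure the extension argument is airtight: one must verify that inserting $a$ at the front of each block does not accidentally create a non-alternation among the original vertices (it does not, since $u_{V \setminus \{a\}}$ equals the original word) and that $a$ genuinely alternates with each original vertex (it does, since each block contributes exactly one $a$ at its start, so $u_{\{a,x\}}$ reads $ax\,ax \cdots ax$ up to possibly trailing letters, depending on where $x$ sits in each $p_i$ — but since each $p_i$ contains $x$ exactly once, $u_{\{a,x\}} = a x a x \cdots a x$, which is alternating). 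There is no real difficulty beyond this bookkeeping; the statement is essentially a direct consequence of the fact that an all-adjacent (dominating) vertex can always be prepended to a permutational representation without cost.
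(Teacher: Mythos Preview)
Your proof is correct and uses essentially the same idea as the paper: the key step in both is that prepending the all-adjacent vertex $a$ to each permutation in a representation of $C[V\setminus\{a\}]$ yields a permutational representation of $C$ with the same number of permutations. The paper organizes this as a direct contradiction (assuming irreducibility gives a $(k-1)$-representation of $C[V\setminus\{a\}]$, which extends to one of $C$), whereas you phrase it as showing $\mathcal{R}^p(C[V\setminus\{a\}])=\mathcal{R}^p(C)$, but the substance is identical.
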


\begin{proof}
	On the contrary, suppose $a$ is an all-adjacent vertex of a $k$-\textit{prn}-irreducible comparability graph $C = (V, E)$. As $\mathcal{R}^p(C[V \setminus \{a\}]) = k-1$, let $p_1p_2 \cdots p_{k-1}$ represents the induced subgraph  $C[V \setminus \{a\}]$, where each $p_i$ $(1 \le i \le k-1)$ is a permutation on $V \setminus \{a\}$. Then consider the permutations $q_i = ap_i$, for all $1 \le i \le k-1$, on $V$  and note that the word $q_1q_2 \cdots q_{k-1}$ represents the graph $C$. Hence, $\mathcal{R}^p(C) = k-1$; a contradiction. \qed
\end{proof}

In view of Proposition \ref{no_all_adja_vertex}, we have the following corollary of Theorem \ref{main_charc}.

\begin{corollary}\label{Chac_irr_graphs}
	Let $G$ and $G'$ be comparability graphs such that $G$ is $k$-\textit{prn}-irreducible and $G'$ is $k'$-\textit{prn}-irreducible. Then $G \srec G'$ is a comparability graph if and only if $m$ is a source of $G$ and $m'$ is a sink of $G'$ with respect to some transitive orientations of $G$ and $G'$.
\end{corollary}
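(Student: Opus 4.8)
The plan is to derive Corollary~\ref{Chac_irr_graphs} directly from the characterization in Theorem~\ref{main_charc}(1) together with Proposition~\ref{no_all_adja_vertex}. Theorem~\ref{main_charc}(1) says $G \srec G'$ is a comparability graph if and only if (a) $m$ and $m'$ are source and sink (or vice versa) with respect to some transitive orientations, or (b) at least one of $m, m'$ is an all-adjacent vertex. The point of the corollary is that, in the \textit{prn}-irreducible setting, option (b) is vacuous, so the characterization collapses to the clean statement involving only source/sink.

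The forward direction is the substantive one: assume $G \srec G'$ is a comparability graph. By Theorem~\ref{main_charc}(1), one of (a), (b) holds. Since $G$ is $k$-\textit{prn}-irreducible with $k \ge 2$ and $G'$ is $k'$-\textit{prn}-irreducible with $k' \ge 2$, Proposition~\ref{no_all_adja_vertex} tells us that neither $G$ has an all-adjacent vertex nor does $G'$; in particular $m$ is not all-adjacent in $G$ and $m'$ is not all-adjacent in $G'$. Hence option (b) fails, so option (a) must hold: $m$ and $m'$ are source and sink (or vice versa) with respect to suitable transitive orientations. By the standing convention in the paper (the Remark following Theorem~\ref{rec_comparability}), we may state this as ``$m$ is a source of $G$ and $m'$ is a sink of $G'$.'' The converse direction is immediate: if $m$ is a source of $G$ and $m'$ is a sink of $G'$ with respect to some transitive orientations, then $G \srec G'$ is a comparability graph by Theorem~\ref{rec_comparability} (equivalently, by the first bullet of Theorem~\ref{main_charc}(1)).

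There is no real obstacle here; the only thing to be careful about is making sure the definition of \textit{prn}-irreducible indeed forces $k, k' \ge 2$ so that Proposition~\ref{no_all_adja_vertex} applies, and this is built into the definition. One could also note explicitly that a $k$-\textit{prn}-irreducible graph has at least two vertices and is not complete, which is consistent with the absence of an all-adjacent vertex. The proof is therefore a two-line deduction:

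\begin{proof}
	If $m$ is a source of $G$ and $m'$ is a sink of $G'$ with respect to some transitive orientations, then $G \srec G'$ is a comparability graph by Theorem \ref{rec_comparability}. Conversely, suppose $G \srec G'$ is a comparability graph. Since $G$ is $k$-\textit{prn}-irreducible and $G'$ is $k'$-\textit{prn}-irreducible with $k, k' \ge 2$, by Proposition \ref{no_all_adja_vertex}, neither $G$ nor $G'$ has an all-adjacent vertex; in particular, neither $m$ nor $m'$ is an all-adjacent vertex. Hence, by Theorem \ref{main_charc}, $m$ and $m'$ must be a source and a sink (or vice versa) with respect to some transitive orientations of $G$ and $G'$. 	\qed
\end{proof}
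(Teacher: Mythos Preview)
Your proposal is correct and follows essentially the same approach as the paper: the paper states this corollary as an immediate consequence of Theorem~\ref{main_charc} together with Proposition~\ref{no_all_adja_vertex}, which is precisely what you do.
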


We now obtain the permutation-representation number of a split recomposition of two \textit{prn}-irreducible comparability graphs in the following theorem. 

\begin{theorem} \label{theorem_8}
	For $k, k' \geq 3$, let $G$ and $G'$ be $k$-\textit{prn}-irreducible and $k'$-\textit{prn}-irreducible comparability graphs, respectively.  If $G \srec G'$ is a comparability graph, then  $\mathcal{R}^p(G \srec G') = \max\{k, k'\}$.	
\end{theorem}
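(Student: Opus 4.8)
The idea is to combine the characterization in Corollary~\ref{Chac_irr_graphs} with the upper-bound construction of Theorem~\ref{theorem_7} and then to rule out the ``$+1$'' possibility. Since $G\srec G'$ is a comparability graph and both components are \textit{prn}-irreducible (hence, by Corollary~\ref{Chac_irr_graphs}, $m$ is a source of $G$ and $m'$ is a sink of $G'$ with respect to some transitive orientations), Theorem~\ref{theorem_7} immediately gives $\mathcal{R}^p(G\srec G')\in\{\max\{k,k'\},\,1+\max\{k,k'\}\}$, and the lower bound $\mathcal{R}^p(G\srec G')\ge\max\{k,k'\}$ follows from Remark~\ref{com_reco} since $G$ and $G'$ embed as induced subgraphs. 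So the entire content of the theorem is to establish the upper bound $\mathcal{R}^p(G\srec G')\le\max\{k,k'\}$, i.e.\ to show that one does \emph{not} need the extra permutation.

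\textbf{Main step.} Write $t=\max\{k,k'\}$ and, say, assume $k'\le k=t$. The plan is to produce a realizer of the poset $P_{G\srec G'}$ of size $t$ directly. The key leverage is \textit{prn}-irreducibility: for any vertex $a\in V$, the induced subposet $P_{G[V\setminus\{a\}]}$ (equivalently $P_G$ with $a$ and the letter $m$ removed, since $m$ is a source of $G$ and $V$ induces $G[V]$) has dimension $k-1=t-1$. The strategy is to pick a convenient vertex $a\in N_G(m)$ — such a vertex exists since $G$ is connected and not edgeless — delete it, get a realizer $\{L_1,\dots,L_{t-1}\}$ of $P_{G[V\setminus\{a\}]}$ of size $t-1$, then re-insert $a$ while simultaneously splicing in a size-$t$ (or size-$(t-1)$-extended-to-$t$) realizer of $P_{G'[V']}$ so that the ``interface'' between $N_G(m)$ and $N_{G'}(m')$ (which is a complete bipartite subgraph in $G\srec G'$) is handled inside the \emph{same} $t$ linear orders rather than needing a fresh one. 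Concretely: in $t-1$ of the linear orders, place all of $V'$ below all of $V$ (mimicking the orientation $\overrightarrow{b'a}$ for $a\in N_G(m)$, $b'\in N_{G'}(m')$), and in one further linear order, place $V\setminus N_G(m)$, then $N_{G'}(m')$, then $N_G(m)$, then $V'\setminus N_{G'}(m')$ — this is exactly the order $v_{k'+1}$ built in Theorem~\ref{theorem_7}, and it is the order responsible for forcing non-edges. The gain over Theorem~\ref{theorem_7} is that the re-insertion of the deleted vertex $a$ can be absorbed into the construction à la the Hiraguchi/Trotter extension used in Proposition~\ref{prop_perm_ext}: extending a realizer of $P_{G[V\setminus\{a\}]}$ to one of $P_{G[V]}$ costs at most one extra linear order, and that one extra linear order can be \emph{identified with} the interface order $v_{k'+1}$ already present, provided $a$ is chosen in $N_G(m)$ so that its position in that special order (the $N_G(m)$ block) is compatible. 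One then verifies, as in Theorem~\ref{theorem_7}, that these $t$ linear orders form a realizer: edges inside $G[V]$ and inside $G'[V']$ are handled because the restrictions give realizers of $P_{G[V]}$ (by the extension argument) and $P_{G'[V']}$; interface edges $\overline{aa'}$ with $a\in N_G(m)$, $a'\in N_{G'}(m')$ are comparable in all $t$ orders (always $a'<a$); and every non-edge — between $V\setminus N_G(m)$ and $V'$, or between $V$ and $V'\setminus N_{G'}(m')$ — is a reversal witnessed by the special order versus the others.

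\textbf{The obstacle.} The delicate point is the simultaneous compatibility: I must choose the vertex $a\in N_G(m)$ to delete, \emph{and} the realizer of $P_{G[V\setminus\{a\}]}$, so that when $a$ is re-inserted, the unique ``extra'' linear order forced by Hiraguchi-style extension can be made to coincide with — or be refined into — the interface order $v_{k'+1}$ without creating any spurious comparability or incomparability among the vertices of $V\setminus\{a\}$, of $V'$, or across. In Proposition~\ref{prop_perm_ext} the extending order $M_{k+1}$ was $L_k(V\setminus U(m))<m<L_k(U(m))$; here the analogue must simultaneously place the $V'$-blocks correctly and keep $a$ inside the $N_G(m)$ block. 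Verifying that this can always be done — in particular that deleting a neighbour of $m$ never destroys the source property needed, and that the block structure $V\setminus N_G(m)\,<\,N_{G'}(m')\,<\,N_G(m)\,<\,V'\setminus N_{G'}(m')$ remains a genuine linear extension of $P_{G\srec G'}$ after $a$ is slotted back into the third block — is where the real work lies; the rest is the routine alternation check already rehearsed in the proof of Theorem~\ref{theorem_7}. A symmetric argument (deleting a vertex of $G'$ instead) handles the case $k\le k'$.
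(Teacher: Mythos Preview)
Your framing is right: Corollary~\ref{Chac_irr_graphs} supplies the source/sink hypothesis, Remark~\ref{com_reco} gives the lower bound $\max\{k,k'\}$, and the whole content is the upper bound $\mathcal{R}^p(G\srec G')\le\max\{k,k'\}$. But you have missed the one-line observation that makes the upper bound immediate, and in its place you have set up a genuinely harder problem that you do not resolve.

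The point you are missing is that \textit{prn}-irreducibility should be applied to the \emph{marked vertex itself}. The vertex set of $G$ is $V\cup\{m\}$, so deleting $m$ is a legitimate single-vertex deletion, and $k$-\textit{prn}-irreducibility gives $\mathcal{R}^p(G[V])=k-1$ (and likewise $\mathcal{R}^p(G'[V'])=k'-1$). Now look at the construction in Theorem~\ref{theorem_7} (via Lemma~\ref{const_perm_GG'}): the permutations $p_i$ that feed into it are permutations on $V$, i.e.\ they constitute a realizer of $P_{G[V]}$, and the output has exactly one more permutation than the number of $p_i$'s. In Theorem~\ref{theorem_7} one used $k$ of them because one only knew $\dim P_{G[V]}\le k$; here one may take $k-1$ (and $k'-1$ on the other side), and the \emph{identical} construction yields $\max\{k-1,k'-1\}+1=\max\{k,k'\}$ permutations. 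No re-insertion is needed, because the marked vertices are precisely the vertices that vanish in the recomposition. This is exactly what the paper does.

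Your plan instead deletes some $a\in N_G(m)$. This is a detour with real cost. First, a small error: \textit{prn}-irreducibility does \emph{not} say $\dim P_{G[V\setminus\{a\}]}=k-1$; that poset has two vertices removed from $P_G$, and you only get $\le k-1$. More seriously, the vertex $a$ survives in $G\srec G'$, so you must re-insert it, and a generic Hiraguchi extension of a $(k-1)$-realizer of $G[V\setminus\{a\}]$ to a realizer of $G[V]$ produces a $k$-th linear order that has no reason to coincide with (or even be refinable to) the interface order $p_{{k-1}_{V\setminus N_G(m)}}\,p_{{k-1}_{N_G(m)}}$; the ``compatibility'' you flag in your obstacle paragraph is exactly this, and you have not shown it can always be arranged. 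The difficulty disappears once you delete $m$ instead of $a$.
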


\begin{proof}
	Suppose $G \srec G'$ is a comparability graph. In view of Corollary \ref{Chac_irr_graphs}, let $m$ be a source of $G$ and $m'$ be a sink of $G'$ with respect to some transitive orientations of $G$ and $G'$. Note that $\mathcal{R}^p(G) = k$, $\mathcal{R}^p(G') = k'$ and $\mathcal{R}^p(G[V]) = k - 1$, $\mathcal{R}^p(G'[V']) = k' - 1$. Along the lines of Lemma \ref{const_perm_GG'}, consider the permutations $p_i$ $(1 \le i \le k-1)$ on $V$, $p'_i$ $(1 \le i \le k'-1)$ on $V'$, $q_i$ $(1 \le i \le k)$ on the vertices of $G$, and $q'_i$ $(1 \le i \le k')$ on the vertices of $G'$ such that the words $w = q_1q_2 \cdots q_{k-1}q_{k}$ and $w' = q'_1q'_2 \cdots q'_{k'-1}q'_{k'}$ represent the graphs $G$ and $G'$, respectively, where
	\[\begin{array}{ll}
		q_{i} = mp_{i}, \quad(1 \le i \le k-1); \quad \quad & q'_{i} = p'_{i}m', \quad(1 \le i \le k'-1); \\
		q_{k} = p_{{k-1}_{V \setminus N_{G}(m)}}m p_{{k-1}_{N_{G}(m)}}; \quad \quad & q'_{k'} = p'_{{k'-1}_{N_{G'}(m')}}m'p'_{{k'-1}_{V' \setminus  N_{G'}(m')}}. \\
	\end{array}\]
	
	Suppose $k' \le k$.  We construct $k$ number of permutations $v_i$ ($1 \le i \le k$) on the vertices of $G \srec G'$ as per the following and show that the word $v = v_1v_2 \cdots v_{k}$ represents the graph $G \srec G'$.
	\begin{align*}
		v_{i} & = p'_ip_i, \quad(1 \le i \le k'-1) \\
		v_{j} & = p'_{k'-1}p_{j}, \quad(k' \le j \le k-1) \\
		v_{k } & = p_{{k-1}_{V \setminus N_{G}(m)}}p'_{{k'-1}_{N_{G'}(m')}}p_{{k-1}_{N_{G}(m)}}p'_{{k'-1}_{V' \setminus N_{G'}(m')}}
	\end{align*} 
	Note that $G[V]$ and $G'[V']$ are induced subgraphs of $G \srec G'$ represented by 
	\begin{align*}
		w_{V} &=  p_1p_2 \; \cdots \; p_{k-1} p_{{k-1}_{V \setminus N_{G}(m)}}p_{{k-1}_{N_{G}(m)}}, \; \text{and}\\
		w'_{V'} &= p'_1p'_2 \; \cdots \; p'_{k'-1}p'_{{k'-1}_{N_{G'}(m')}}p'_{{k'-1}_{V' \setminus  N_{G'}(m')}},
	\end{align*}
	respectively. Further, note that 
	\begin{align*}
		v_{V} &= p_1p_2 \; \cdots \; p_{k-1}p_{{k-1}_{V \setminus N_{G}(m)}}p_{{k-1}_{N_{G}(m)}}\; \text{and}\\
		v_{V'} &= p'_1p'_2 \; \cdots \; p'_{k'-2} \overbrace{p'_{k'-1} \; \cdots \; p'_{k'-1}}^{k - k'+1}p'_{{k'-1}_{N_{G'}(m')}}p'_{{k'-1}_{V' \setminus  N_{G'}(m')}}.
	\end{align*}
	Thus, any two vertices of $G[V]$ or of $G'[V']$ are adjacent if and only if they alternate in the word $v$.
	 
	 For $a \in V$ and $a' \in V'$, note that $a$ and $a'$ are adjacent in $G \srec G'$ if and only if $a \in N_G(m)$ and $a' \in N_{G'}(m')$. Accordingly, we consider the following cases to show that $a$ and $a'$ are adjacent in $G \srec G'$ if and only if they alternate in $v$. If $a \in N_G(m)$ and $a' \in N_{G'}(m')$, then from the construction of $v$, we have $a'a \preceq v_i$ for all $1 \le i \le k$, so that $a$ and $a'$ alternate in $v$. If $a \notin N_{G}(m)$, then for any $a' \in V'$, we have $a'a \preceq v_1$ but $aa' \preceq v_{k}$.  Similarly, if $a' \notin N_{G'}(m')$, then for any vertex $a \in V$, we have $a'a \preceq v_1$ but $aa' \preceq v_{k}$. Hence, $a$ and $a'$ do not alternate in $v$, if $a \notin N_G(m)$ or $a' \notin N_{G'}(m')$. Hence the word $v$ represents $G \srec G'$ permutationally.

	Similarly, when $k \le k'$, we can show that the word $v = v_1v_2 \cdots v_{k'-1}v_{k'}$ represents the graph $G \srec G'$ permutationally, where 
	\begin{align*}
		v_{i} & = p'_ip_i, \quad(1 \le i \le k-1); \\
		v_{j} & = p'_jp_{k-1}, \quad(k \le j \le k'-1); \\
		v_{k'} & = p_{{k-1}_{V \setminus N_{G}(m)}}p'_{{k'-1}_{N_{G'}(m')}}p_{{k-1}_{N_{G}(m)}}p'_{{k'-1}_{V' \setminus N_{G'}(m')}}.
	\end{align*}
	
	In any case, since $v$ is a concatenation of $\max\{k, k'\}$ number of permutations, we have $\mathcal{R}^p(G \srec G') \le \max\{k, k'\}$. Also, since $G$ and $G'$ are isomorphic to certain induced subgraphs of $G \srec G'$ (cf. Remark \ref{com_reco}), we have  $\mathcal{R}^p(G \srec G') \ge \max\{k, k'\}$. Hence,
	$\mathcal{R}^p(G \srec G') = \max\{k, k'\}$. \qed	 
\end{proof}

In view of Theorem \ref{theorem_8}, we have the \textit{prn} of following bipartite graphs which are obtained from the recomposition of \textit{prn}-irreducible graphs.

\begin{example}
	For $k, k' \ge 3$,  the \textit{prn} of the recomposition of crown graphs $H_{k, k}$ and $H_{k', k'}$ is $\max\{k, k'\}$.
\end{example}

\begin{example}
	For $k, k' \ge 3$,  the \textit{prn} of the recomposition of even cycles $C_{2k}$ and $C_{2k'}$ is 3.
\end{example}

\begin{theorem}
  For $k, k' \geq 3$, if $G$ and $G'$ be $k$-\textit{prn}-irreducible and $k'$-\textit{prn}-irreducible comparability graphs, respectively, then $G \srec G'$ is not \textit{prn}-irreducible.
\end{theorem}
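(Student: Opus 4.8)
The plan is to show that if $G$ and $G'$ are $k$- and $k'$-\textit{prn}-irreducible with $G \srec G'$ a comparability graph, then the recomposition has some vertex whose deletion does not drop the \textit{prn}, violating the defining property of \textit{prn}-irreducibility. By Corollary \ref{Chac_irr_graphs}, since both $G$ and $G'$ are \textit{prn}-irreducible, the hypothesis that $G \srec G'$ is a comparability graph forces $m$ to be a source of $G$ and $m'$ to be a sink of $G'$ with respect to suitable transitive orientations; moreover, by Theorem \ref{theorem_8}, $\mathcal{R}^p(G \srec G') = \max\{k, k'\}$. Without loss of generality assume $k' \le k$, so $\mathcal{R}^p(G \srec G') = k$.

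First I would locate a ``removable'' vertex. The natural candidate is a vertex $a'$ of $V'$: the induced subgraph $(G \srec G')[V \cup V' \setminus \{a'\}]$ should still have \textit{prn} equal to $k$. The key observation is that deleting $a'$ from $G \srec G'$ yields a graph that is (isomorphic to) the split recomposition of $G$ with $G'[V' \setminus \{a'\}]$ along the same marked vertices $m$ and $m'$, provided $a' \ne m'$, which is automatic since $m' \notin V'$. Here $G'[V' \setminus \{a'\}]$ has $\mathcal{R}^p = k' - 1$ because $G'$ is $k'$-\textit{prn}-irreducible. We must check that this smaller recomposition is still a comparability graph: since the original transitive orientation of $G'$ restricts to one of $G'[V' \cup \{m'\}] \setminus \{a'\}$ in which $m'$ is still a sink (sinkness is inherited by induced subgraphs containing the vertex), Theorem \ref{rec_comparability} applies. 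Then by Theorem \ref{theorem_8} (or, if $k'-1 = 2$, by Theorem \ref{main_charc} together with a direct argument, since \ref{theorem_8} needs both parameters $\ge 3$), $\mathcal{R}^p\big(G \psrec{m}{m'} (G'[V' \setminus \{a'\}])\big) = \max\{k, k'-1\}$.

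Now the case analysis: if $k' \le k$ and also $k' - 1 \ge 1$, then $\max\{k, k'-1\} = k = \mathcal{R}^p(G \srec G')$, so deleting the vertex $a'$ did not reduce the \textit{prn}, contradicting \textit{prn}-irreducibility of $G \srec G'$. The only remaining situation is $k = k'$, where deleting a vertex of $V'$ gives $\max\{k, k-1\} = k$ — still no drop — so the contradiction persists. Symmetrically, if $k \le k'$ one deletes a vertex of $V$ instead. Hence $G \srec G'$ cannot be \textit{prn}-irreducible.

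The main obstacle I anticipate is the bookkeeping around small values, specifically when $\min\{k,k'\} = 3$ so that one of the reduced components has \textit{prn} equal to $2$ and Theorem \ref{theorem_8} (which assumes both arguments are $\ge 3$) no longer directly applies. In that sub-case one needs the more general Theorem \ref{theorem_7} or \ref{main_charc}, which only gives $\mathcal{R}^p = \max\{k,k'-1\}$ \emph{or} $1 + \max\{k,k'-1\}$; one must argue the larger value cannot occur, e.g. by exhibiting the explicit concatenation of $\max\{k, k'-1\} = k$ permutations built as in the proof of Theorem \ref{theorem_8} using a realizer of $G'[V' \setminus \{a'\}]$ of size $k'-1$ (padded if necessary) together with the source/sink extension of Lemma \ref{const_perm_GG'}; since the source $m$ of $G$ and the sink $m'$ of $G'[V' \setminus \{a'\}]$ supply the two extra ``half-permutations'' that merge into a single last permutation, $k$ permutations suffice. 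A clean way to sidestep the subtlety entirely is to note $\mathcal{R}^p(G \srec G') \ge \mathcal{R}^p\big(G \srec (G'[V'\setminus\{a'\}])\big) \ge \mathcal{R}^p(G) = k$ (induced-subgraph monotonicity of \textit{prn}), combined with the upper bound $k$ just constructed, which pins the value at $k$ without invoking Theorem \ref{theorem_8} at all.
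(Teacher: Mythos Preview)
Your final ``clean'' sandwich argument is exactly the paper's proof: delete a vertex on the $G'$ side, bound the \textit{prn} of the resulting induced subgraph from above by $\mathcal{R}^p(G\srec G')=k$ and from below by $\mathcal{R}^p(G)=k$ via Remark~\ref{com_reco}. So the core idea is correct and matches the paper.

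Two points need tightening. First, the detour through Theorem~\ref{theorem_8} is misdiagnosed. The obstruction is not merely that $k'-1$ might equal $2$; Theorem~\ref{theorem_8} requires \emph{both} components to be \textit{prn}-irreducible, and $G'\setminus\{a'\}$ has no reason to be \textit{prn}-irreducible for any $k'$. So that route never applies, regardless of the size of $k'-1$. Fortunately your fallback sandwich argument does not need it.

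Second, and more substantively, you cannot take $a'\in V'$ arbitrary. The lower bound $\mathcal{R}^p\big((G\srec G')\setminus\{a'\}\big)\ge \mathcal{R}^p(G)$ relies on Remark~\ref{com_reco}, which needs some $b'\in N_{G'}(m')$ to survive the deletion so that $(G\srec G')[V\cup\{b'\}]\cong G$ is still an induced subgraph. If $a'$ happened to be the unique neighbour of $m'$, this fails. The paper handles this by first invoking Proposition~\ref{no_all_adja_vertex}: since $G'$ is \textit{prn}-irreducible, $m'$ is not all-adjacent, so one may choose the deleted vertex $c'\in V'\setminus N_{G'}(m')$; then every $b'\in N_{G'}(m')$ remains, and the lower bound goes through. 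With that single refinement (and writing $G'\setminus\{a'\}$ rather than $G'[V'\setminus\{a'\}]$, since $m'$ must stay in the component), your argument coincides with the paper's.
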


\begin{proof}
	Suppose $G \srec G'$ is a comparability graph. By Proposition \ref{no_all_adja_vertex}, $m$ and $m'$ are not all-adjacent vertices. Also, by Theorem \ref{theorem_8}, we have $\mathcal{R}^p(G \srec G') = \max\{k, k'\} = k$, say, without loss of generality.
	
	Since $m'$ is not an all-adjacent vertex of $G'$, there exists $c' \in V'$ such that $c' \notin N_{G'}(m')$. Consider the induced subgraph $L = (G \srec G')[V \cup (V' \setminus \{c'\})]$. 
	Clearly, $\mathcal{R}^p(L) \le k$. We show that $\mathcal{R}^p(L) = k$ so that $G \srec G'$ is not \textit{prn}-irreducible.
	 
	For $b' \in N_{G'}(m')$, note that the induced subgraph $(G \srec G')[V \cup \{b'\}]$ is isomorphic to $G$ and it is also an induced subgraph of $L$. Thus, $\mathcal{R}^p(L) \ge k$ and hence, $\mathcal{R}^p(L) = k$.  \qed
\end{proof}

\section{Conclusion}

This work paves a way, in general, to study the word-representability of graphs using split decomposition and recomposition. As the problem of deciding whether a given graph is word-representable is NP-complete, using split decomposition and recomposition, the problem can be reduced to its prime components. For example, the parity graphs are established as word-representable, because their prime components are word-representable (see Corollary \ref{parity_graphs}). In view of Corollary \ref{perfectgraphs}, the open problem of characterizing word-representable perfect graphs can be approached through the following problem: What are the prime perfect graphs that are not word-representable? Similar to bipartite graphs, it is also interesting to find the subclasses of comparability graphs which are closed under recomposition. If one of the marked vertices is an all-adjacent vertex in a general case, or in the case $G$ and $G'$ are \textit{prn}-irreducible graphs, we proved that $\mathcal{R}^p(G \srec G') = \kappa$, where  $\kappa = \max\{\mathcal{R}^p(G), \mathcal{R}^p(G')\}$.  However, there are two categories of \textit{prn} for the recomposed comparability graph, i.e., $\mathcal{R}^p(G \srec G') = \kappa$ or $1 + \kappa$. Accordingly, one may target to classify the graphs with respect to the \textit{prn} of their recomposition amongst the two categories.

\end{document}